\documentclass[pdflatex,sn-mathphys-num]{sn-jnl}%
\usepackage{fullpage}
\usepackage{amsmath}
\usepackage{amsthm}
\usepackage{graphicx}
\usepackage{amsfonts}
\usepackage{color}
\usepackage{xcolor}
\usepackage{booktabs}
\usepackage{tabularx}
\newtheorem{defn}{Definition}
\newtheorem{thm}{Theorem}
\newtheorem{lem}{Lemma}
\newtheorem{prop}{Proposition}

\newtheorem{fact}{Fact}
\newtheorem{cor}{Corollary}
\renewcommand{\figurename}{Fig.}

\usepackage{enumitem}
\usepackage{comment}
\usepackage{tikz}
\usetikzlibrary{arrows.meta, positioning, calc}
\usetikzlibrary{positioning}

\definecolor{qLight}{gray}{0.85}
\definecolor{qDark}{gray}{0.70}

\begin{document}

\title[Testing Full Quartet Consistency: Adaptive and Non-Adaptive Bounds]{Testing Full Quartet Consistency: Adaptive Reconstruction, Random Verification, and Constant-Query Testability\footnote{This work is supported by the National Science and Technology Council of Taiwan under grant no.~NSTC 115-2221-E-019-045-.}}

\author[1]{\fnm{Chuang-Chieh}\sur{Lin}}\email{josephcclin@mail.ntou.edu.tw}
\affil[1]{\orgdiv{Department of Computer Science and Engineering}, \orgname{National Taiwan Ocean University}, \orgaddress{\street{No.2, Beining Rd., Jhongjheng Dist.}, \city{Keelung City}, \postcode{202301}, \country{Taiwan}}}

\abstract{We study dense property testing for full systems of resolved quartet topologies on $n$ taxa: determining whether a system is induced by a phylogenetic tree or is $\varepsilon$-far from every tree-induced system. Our main result is an explicit polynomial-time adaptive one-sided-error tester. It reconstructs a candidate tree through anchored quartet queries and verifies the candidate using uniformly random quartet queries. With error probability $\delta$, it uses
$O\!\left(n\log n+\varepsilon^{-1}\log(1/\delta)\right)$ queries. We also give a non-adaptive cached-anchor variant using
$\binom{n-1}{3}+O\!\left(\varepsilon^{-1}\log(1/\delta)\right)$ queries. Both improve the previous explicit $O(n^3/\varepsilon)$ query bound. 
Since the input contains $\binom{n}{4}=\Theta(n^4)$ quartet entries, both testers use $o\!\left(\binom{n}{4}\right)$ queries for fixed~$\varepsilon$ and~$\delta$.

We additionally encode full quartet systems, equivariantly under relabeling, as directed, three-colored $4$-ary structures. Hereditary directed-hypergraph testing then yields an $n$-independent one-sided-error tester, although its dependence on $\varepsilon$ is quantitatively impractical.

Finally, we prove lower bounds. In ordinary property testing, every adaptive randomized tester, even with two-sided error, requires asymptotically at least
$\ln((1-\delta)/\delta)/\ln(1/(1-\varepsilon))$ queries as $n\to\infty$. Every one-sided-error tester requires
$\ln(1/\delta)/\ln(1/(1-\varepsilon))$ queries, matching the random-verification term up to rounding. For the stronger reconstruct-or-reject task, our upper bounds are optimal up to constant factors: the adaptive and non-adaptive complexities are
$\Theta(n\log n+\varepsilon^{-1}\log(1/\delta))$ and
$\Theta(n^3+\varepsilon^{-1}\log(1/\delta))$, respectively.}

\keywords{Randomized algorithm, Property testing, Phylogenetic tree, Quartet method.}

\maketitle

%%%%%%%%%%%%%%%%%%%%%%%%%%%%%%%%%%%%%%%%%%%%%%%%%%%%%%%%%%%%%%%%%%%
\section{Introduction}
\label{sec:intro}
%%%%%%%%%%%%%%%%%%%%%%%%%%%%%%%%%%%%%%%%%%%%%%%%%%%%%%%%%%%%%%%%%%%

Quartet topologies are a standard local representation of an unrooted phylogenetic 
tree~\cite{SempleSteel2003}: for every set of four taxa, the tree displays 
exactly one of the three resolved quartet topologies, and the collection of all 
its displayed quartets uniquely determines the tree~\cite{colonius1981tree}. 
In empirical data, however, independently inferred quartet topologies
need not be jointly realizable by any tree.  This motivates the algorithmic problem
of distinguishing a tree-like full quartet system from one that requires changing
a positive fraction of all quartet topologies.

Property testing is a central paradigm in sublinear-time algorithms.
Given oracle access to a large object, a tester seeks to distinguish
objects satisfying a property from those that are far from satisfying
it, while inspecting only a sublinear number of input 
entries~\cite{GoldreichGoldwasserRon98,Goldreich17}. This relaxation has led
to general techniques for analyzing functions, graphs, hypergraphs,
distributions, and other large combinatorial structures~\cite{Goldreich17}.

Full quartet systems fit naturally into the dense oracle model. A full
system on $n$ taxa contains $\binom{n}{4}=\Theta(n^4)$ entries, and a
query selects four taxa and returns their resolved split. We ask whether
tree-like systems can be distinguished from systems that are
$\varepsilon$-far from tree-like, with error probability at most
$\delta$, using $o\!\left(\binom{n}{4}\right)$ queries, where distance
is measured by the fraction of quartet entries that must be changed.
The central question is therefore not whether tree-likeness can be
verified exactly, but whether tree-like systems can be distinguished
from systems that are $\varepsilon$-far from tree-like, with high
probability, using only a carefully chosen sublinear number of queries.

%==================================================================
\subsection{The Model and Testing Framework}
\label{subsec:model}
%==================================================================

Let $X$ be an $n$-taxon set.  An \emph{unrooted phylogenetic
$X$-tree} is an unrooted tree $T$ whose leaves are bijectively labeled by
$X$ and whose internal vertices have degree~$3$ (see
Fig.~\ref{fig:restriction-process}(i)).  For a four-set
$Y=\{a,b,c,d\}$, write
\[
\mathcal Q(Y):=\{ab\mid cd,\;ac\mid bd,\;ad\mid bc\},
\]
where, for example, $ab\mid cd$ denotes the unordered bipartition
$\{\{a,b\},\{c,d\}\}$.  The restriction $T|_Y$ is obtained from the
minimal subtree spanning $Y$ by suppressing degree-$2$ vertices.

\begin{defn}[Full quartet system]
\label{def:full_quartet_system}
Let $X$ be a finite set of taxa with $|X|\geq4$.  
A \emph{full quartet system} on $X$ is a function $Q$ that
assigns to every $Y\in\binom X4$ exactly one quartet topology
$Q(Y)\in\mathcal Q(Y)$.
\end{defn}

\begin{defn}[Displaying quartets and quartet systems]
\label{def:display}
Let $T$ be an unrooted phylogenetic $X$-tree, let $Y=\{a,b,c,d\}\in\binom X4$, and 
let $q = ab\mid cd\in\mathcal Q(Y)$.  We say that $T$ \emph{displays}~$q$
on $Y$ if deleting the unique internal edge of $T|_Y$ separates
$a,b$ from $c,d$.  We write $Q_T(Y)$ for the unique member of
$\mathcal Q(Y)$ displayed by $T$ on $Y$. Thus, $Q_T$ is the full
quartet system induced by $T$.  More generally, $T$ \emph{displays}
a full quartet system~$Q$ if
\[
Q(Y)=Q_T(Y)\quad\text{for every }Y\in\binom X4.
\]
Equivalently, $T$ displays~$Q$ if $Q = Q_T$.
\end{defn}

\begin{figure}[ht]
\centering
\begin{tikzpicture}[
    thick,
    >={Triangle[width=2mm,length=3mm]}, % Arrow style
    node distance=1cm,
    scale=0.70,
    every node/.style={font=\large\itshape}, % Italic font for labels
    leaf/.style={inner sep=1pt} % Style for leaf nodes
]

    % --- Figure (i) ---
    \begin{scope}[local bounding box=tree1]
        % Define internal coordinates
        \coordinate (root_ac) at (-1, 0);
        \coordinate (join_f) at (-0.3, -0.2);
        \coordinate (join_b) at (0.4, 0.4);
        \coordinate (join_de) at (1.2, 0.4);

        % Draw branches
        % Left cluster (a, b)
        \draw (root_ac) -- +(-0.5, 0.6) node[leaf, above] {\!\!a};
        \draw (root_ac) -- +(-0.5, -0.6) node[leaf, below] {\!\!\!c};
        
        % Spine from (a,b) to f-junction
        \draw (root_ac) -- (join_f);
        
        % f branch
        \draw (join_f) -- +(0.3, -0.6) node[leaf, below] {f};
        
        % Spine from f-junction to b-junction
        \draw (join_f) -- (join_b);
        
        % b branch
        \draw (join_b) -- +(0, 0.6) node[leaf, above] {b};
        
        % Spine from c-junction to (d,e)
        \draw (join_b) -- (join_de);
        
        % Right cluster (d, e)
        \draw (join_de) -- +(0.5, 0.5) node[leaf, above] {\;e};
        \draw (join_de) -- +(0.5, -0.5) node[leaf, below] {\;d};

        % Label T
        %\node[font=\large\normalfont] at (0, -1.2) {$T$};
        % Label (i)
        \node[font=\large\normalfont] at (0, -2.0) {(i)};
    \end{scope}

    % --- Arrow 1 ---
    \draw[->, very thick] ($(tree1.east) + (0.2, 0.5)$) -- ++(1, 0);

    % --- Figure (ii) ---
    \begin{scope}[xshift=5.0cm, local bounding box=tree2]
        % Same coordinates, modified for removal
        \coordinate (root_ac) at (-1, 0);
        \coordinate (join_f) at (-0.3, -0.2); % f is removed, but the bend remains
        \coordinate (join_b) at (0.4, 0.4);
        
        % Draw branches
        \draw (root_ac) -- +(-0.5, 0.6) node[leaf, above] {\!\!a};
        \draw (root_ac) -- +(-0.5, -0.6) node[leaf, below] {\!\!\!c};
        
        % Spine (preserving the bend where f was)
        \draw (root_ac) -- (join_f);
        \draw (join_f) -- (join_b);
        
        % b branch
        \draw (join_b) -- +(0, 0.6) node[leaf, above] {b};
        
        % d branch (e is removed, d connects to c-junction)
        % Note: In the image (ii), the line to d goes down-right from the c-junction
        \draw (join_b) -- +(0.8, -0.6) node[leaf, below] {\;d};
        
        % Label (ii)
        \node[font=\large\normalfont] at (0, -2.0) {(ii)};
    \end{scope}

    % --- Arrow 2 ---
    \draw[->, very thick] ($(tree2.east) + (0.25, 0.5)$) -- ++(1, 0);

    % --- Figure (iii) ---
    \begin{scope}[xshift=8.9cm, local bounding box=tree3]
        % Simplified coordinates for resolved quartet
        \coordinate (left_join) at (-0.5, 0);
        \coordinate (right_join) at (0.5, 0);

        % Draw central edge
        \draw (left_join) -- (right_join);

        % Left cluster
        \draw (left_join) -- +(-0.5, 0.5) node[leaf, above] {\!\!a};
        \draw (left_join) -- +(-0.5, -0.5) node[leaf, below] {\!\!\!c};

        % Right cluster
        \draw (right_join) -- +(0.5, 0.5) node[leaf, above] {\;b};
        \draw (right_join) -- +(0.5, -0.5) node[leaf, below] {\;d};
        
        % Label (iii)
        \node[font=\large\normalfont] at (0, -2.0) {(iii)};
    \end{scope}

\end{tikzpicture}
\caption{Restriction of a phylogenetic tree: (i) the original $X$-tree $T$ on
$X=\{a,b,c,d,e,f\}$; (ii) the minimal connected subgraph $T[S]$ of~$T$ spanning $S=\{a,b,c,d\}$;
(iii) the restricted $S$-tree $T|_S$ obtained from $T[S]$ by suppressing degree-$2$ vertices.}
\label{fig:restriction-process}
\end{figure}

\begin{defn}[Tree-likeness]
\label{def:tree-likeness}
A full quartet system $Q$ on a taxon set~$X$ is
\emph{tree-like} if it is displayed by some unrooted phylogenetic
$X$-tree $T$ in the sense of Definition~\ref{def:display}. Equivalently,
$Q=Q_T$.
\end{defn}

For a fixed taxon set $X$, let $\mathcal{P}_{\text{tree}}(X)$ denote the set of all tree-like full quartet systems on~$X$.
When $X$ is clear from context, we write simply~$\mathcal{P}_{\text{tree}}$.

\begin{defn}[Restriction of a quartet system]
\label{def:restrict-quartet-system}
Let $Q$ be a full quartet system on $X$.  For any
$S\subseteq X$ with $|S|\geq4$, the \emph{restriction} of $Q$ to $S$
is the full quartet system 
\[
Q|_S:\binom{S}{4}\longrightarrow\bigcup_{Y\in\binom S4}\mathcal Q(Y)
\quad\text{defined by}\quad Q|_S(Y):=Q(Y)\quad\text{for all }Y\in\binom S4.
\]
\end{defn}

\begin{defn}[Restriction of an unrooted phylogenetic tree]
\label{def:restrict-tree}
Let $T$ be an unrooted phylogenetic $X$-tree. For any subset $S\subseteq X$ with
$|S|\ge 2$, define the \emph{restriction} $T|_S$ as follows.
Let $T[S]$ be the minimal connected subgraph of $T$ that contains all leaves in~$S$ 
(see Fig.~\ref{fig:restriction-process}(ii)). 
%(the Steiner subtree spanning $S$). 
Obtain $T|_S$ from $T[S]$ by \emph{suppressing} degree-$2$
vertices: while there exists a vertex of degree~$2$, delete it and replace its two incident
edges by a single edge (see Fig.~\ref{fig:restriction-process}(iii)). The resulting tree~$T|_S$ has leaf set $S$ and is called the
\emph{restriction of~$T$ to~$S$ (or the restricted $S$-tree of~$T$)}.
\end{defn}

\paragraph{Distance to tree-likeness}
For two full quartet systems $Q,Q'$ on the same $n$-taxon set
$X$, define their normalized Hamming distance by 
\[
\operatorname{dist}(Q,Q') :=
\frac{\bigl|\{Y\in\binom X4:Q(Y)\ne Q'(Y)\}\bigr|}{\binom n4}.
\]
For a property $\mathcal P$ of full quartet systems on $X$, let 
\[
\operatorname{dist}(Q,\mathcal P):=
\min_{Q^\star\in\mathcal P}\operatorname{dist}(Q,Q^\star).
\]
We say that $Q$ is \emph{$\varepsilon$-far} from $\mathcal{P}$
if $\operatorname{dist}(Q,\mathcal{P})\geq \varepsilon$.  
Thus one must modify the topology of at least $\varepsilon\binom{n}{4}$ quartets to
obtain a member of $\mathcal{P}$. 
%This notion matches the dense property-testing model for colored complete $4$-uniform hypergraphs, where 
We note that the allowable modification operation is modifying the topology of a quartet.

\paragraph{Quartet-query oracle}
A \emph{quartet query} to a full quartet system $Q$ on $X$
submits an unordered four-set $Y=\{a,b,c,d\}\in\binom X4$ and
receives the answer
\[
Q(Y)\in\mathcal{Q}(Y)
=\{ab\mid cd,\;ac\mid bd,\;ad\mid bc\}.
\]
Each queried four-set counts as one oracle query.

\begin{defn}[Property tester for tree-likeness]\label{defn:PT}
A randomized \emph{property tester} for tree-likeness has oracle
access to a full quartet system $Q$ on an $n$-taxon set and, given
$\varepsilon,\delta\in(0,1)$, makes at most
$q=q(n,\varepsilon,\delta)$ queries before outputting \textsf{accept}
or \textsf{reject}.  It has \emph{one-sided error} if it accepts every
tree-like $Q$ with probability~$1$ and rejects every
$\varepsilon$-far $Q$ with probability at least $1-\delta$.  A
two-sided tester may reject a tree-like input with probability at most
$\delta$.  A tester is \emph{adaptive} if a query may depend on earlier
answers, and is \emph{non-adaptive} otherwise.  Tree-likeness is
\emph{strongly testable} if, for every $\varepsilon,\delta\in(0,1)$, 
there is a one-sided tester whose query bound $q(\varepsilon,\delta)$ is 
independent of~$n$.
\end{defn}

%====================================================================
\subsection{Our Contributions}
\label{subsec:contribution}
%====================================================================

Our contribution is threefold. 
\begin{enumerate}[leftmargin=*,label=(\roman*)]
\item \emph{An explicit adaptive tester.}  We first reconstruct a candidate phylogenetic tree using adaptive anchored quartet queries and then verify it through independent uniformly sampled quartet queries.  The resulting polynomial-time, one-sided tester uses
\[
O\!\left(n\log n+
\frac{\log(1/\delta)}{\varepsilon}\right)
\]
queries.  The two dependences are additive, not multiplicative.
We also give a non-adaptive cached-anchor version using
$\binom{n-1}{3}+O(\varepsilon^{-1}\log(1/\delta))$ queries.  Both
improve the explicit $O(n^3/\varepsilon)$ bound of~\cite{CLR11}.

\item \emph{Strong testability via a relabeling-equivariant
hereditary reduction.}  We encode each full quartet system as a
coloring of injective ordered $4$-tuples.  The encoding is equivariant
under arbitrary relabelings and preserves normalized Hamming distance
exactly.  The hereditary testing theorem for directed, colored
hypergraphs then implies that tree-likeness is strongly testable:
for every $\varepsilon,\delta\in(0,1)$, there is a one-sided tester
using $q_{\text{her}}(\varepsilon,\delta)$ queries, where the bound is
independent of~$n$.  The available general bound is quantitatively
impractical, so this existential result and our explicit tester are
complementary.

\item \emph{Testing lower bounds and an adaptivity
separation for reconstruction.}
For property testing without reconstruction, we prove that every
randomized, possibly adaptive, two-sided tester satisfies 
\[
\liminf_{n\to\infty}q(n,\varepsilon,\delta)
\geq
\frac{\ln((1-\delta)/\delta)}
     {\ln(1/(1-\varepsilon))}.
\]
For one-sided testers, the corresponding lower bound is 
\[
\liminf_{n\to\infty}q(n,\varepsilon,\delta)
\geq
\frac{\ln(1/\delta)}
     {\ln(1/(1-\varepsilon))},
\]
which matches, up to rounding, the number of uniformly 
sampled verification queries used by our explicit testers.  These
bounds apply to adaptive testers and therefore also to non-adaptive
testers. 
We then consider the stronger reconstruct-or-reject task, which
additionally requires the algorithm to output the exact inducing tree
on every tree-like input.  For this task, we establish tight
worst-case query complexities, up to constant factors, of
\[
\Theta\!\left(
n\log n+\frac{\log(1/\delta)}{\varepsilon}
\right)
\quad\text{adaptively, and}\quad
\Theta\!\left(
n^3+\frac{\log(1/\delta)}{\varepsilon}
\right)
\quad\text{non-adaptively.}
\]
\end{enumerate}

The $n$-dependent terms arise from the exact-reconstruction
requirement and are not lower bounds for property testing without
reconstruction.  Indeed, for fixed $\varepsilon$ and $\delta$, the
hereditary tester has query complexity independent of~$n$.

\begin{table}[ht]
\caption{Comparison of query-complexity results for full quartet
systems. Here, ordinary property testing means deciding tree-likeness without a
reconstruction requirement. The explicit testers developed here also satisfy the stronger
reconstruct-or-reject guarantee.}
\label{tab:comparison}
\centering
\footnotesize
\renewcommand{\arraystretch}{1.15}
\begin{tabularx}{\textwidth}{
    @{}p{0.21\textwidth}
       p{0.16\textwidth}
       p{0.14\textwidth}
       X@{}}
\toprule
Result & Task & Query model & Query complexity and guarantee \\
\midrule

Chang--Lin--Rossmanith~\cite{CLR11}
&
Ordinary property testing
&
Non-adaptive
&
$O(n^3/\varepsilon)$ for constant error; one-sided and explicit.
Standard amplification gives
$O((n^3/\varepsilon)\log(1/\delta))$.
\\

This work, Proposition~\ref{prop:exist}
(via~\cite{AustinTao10})
&
Ordinary property testing
&
Non-adaptive
&
$q_{\mathrm{her}}(\varepsilon,\delta)$, independent of $n$;
one-sided and existential, with quantitatively impractical
dependence on~$\varepsilon$.
\\

This work, Theorem~\ref{thm:learn-audit}
(using~\cite{EmamjomehZadehKempe18})
&
Reconstruct or reject
&
Adaptive
&
$O(n\log n + \ln(1/\delta)/\ln(1/(1-\varepsilon))$;
one-sided, explicit, and polynomial-time.
The bound is tight up to constant factors for this task.
\\

This work, Corollary~\ref{cor:cached-anchor}
&
Reconstruct or reject
&
Non-adaptive
&
$\binom{n-1}{3}+\lceil \ln(1/\delta)/\ln(1/(1-\varepsilon))\rceil$;
one-sided, explicit, and polynomial-time.
The bound is tight up to constant factors for this task.
\\

This work, Corollary~\ref{cor:sharp-testing-lb}
&
Ordinary property-testing lower bound
&
Adaptive allowed
&
As $n\to\infty$, at least
$v_{\varepsilon,\delta}$ queries for one-sided error, and at least
$\ln((1-\delta)/\delta)/\ln(1/(1-\varepsilon))$
for two-sided error.
\\

\bottomrule
\end{tabularx}
\end{table}

Table~\ref{tab:comparison} includes only results directly comparable
in the full-system oracle model.  Results for incomplete or partial
quartet inputs~\cite{CLR13,AlonSnirYuster14}, and results in passive
noisy-sampling models~\cite{ArvanitakisEtAl26}, concern different
access models and are therefore discussed separately.

%==================================================================
\subsection{Related Work}
\label{subsec:related_work}
%==================================================================

\paragraph{Compatibility and optimization}
For a possibly incomplete quartet set, deciding whether all
specified quartets are compatible with one phylogenetic tree is
\textsf{NP}-complete~\cite{S92}. By contrast, full quartet systems 
admit polynomial-time consistency recognition through the local
characterization recalled in
Proposition~\ref{prop:tree-like-quintets_fixed_taxon}~\cite{BD86,CLR11}.  The
maximum quartet consistency problem asks for a tree displaying as many
input quartets as possible, and its dual minimizes the number of
inconsistencies.  These optimization problems are
\textsf{NP}-hard~\cite{BJKLW99}. The maximization problem admits a
PTAS~\cite{JKL01}, while the minimization problem has fixed-parameter
algorithms parameterized by the number of inconsistencies~\cite{GN03,CLR10}.

\paragraph{Testing full quartet systems}
Chang, Lin, and Rossmanith~\cite{CLR11} introduced property
testing for full quartet tree-likeness.  Their non-adaptive one-sided
tester samples quintets containing a fixed taxon and uses
$O(n^3/\varepsilon)$ queries.  Their later parameterized tester handles
missing quartet topologies and runs in
$O(1.7321^kkn^3/\varepsilon)$ time when the number of missing entries
is~$k$~\cite{CLR13}.  Alon, Snir, and Yuster~\cite{AlonSnirYuster14}
showed that dense \emph{partial} quartet sets can be highly
incompatible even though all subsets of a prescribed constant size
are compatible.  Their negative result concerns a different input and
sampling model and does not contradict testability of full systems.

\paragraph{Adaptive query reconstruction and passive quartet learning}
An unknown phylogenetic tree can be reconstructed exactly from
noiseless adaptive triplet or quartet queries using $O(n\log n)$
queries (e.g., see~\cite{WuKLY08,EmamjomehZadehKempe18}). 
At each step, a triplet query determines which candidate region of the 
current partial tree contains the edge to which the new taxon should be attached.
By fixing an outgroup taxon, each adaptive triplet query can be
simulated by an anchored quartet query.  We use such an
exact-reconstruction algorithm as a subroutine.  Unlike the realizable
reconstruction setting, however, our input is an arbitrary full quartet
system and need not be induced by any tree.  We therefore reconstruct a
candidate tree and independently verify it by comparing its induced
quartet topologies with the input on uniformly sampled four-taxon
subsets, thereby obtaining a one-sided property tester in the worst-case
distance model.  In a distinct passive-learning model, recent work
reconstructs a tree that is close in quartet distance to an unknown
ground-truth tree from $\Theta(n)$ uniformly sampled quartets under
random classification noise, for fixed accuracy and noise
parameters~\cite{ArvanitakisEtAl26}.  Passive samples do not allow the
learner to choose the structured, adaptively selected anchored quartet
queries required by our reconstruction phase.

Emamjomeh-Zadeh and Kempe~\cite{EmamjomehZadehKempe18}
also proved an $\Omega(n^3)$ lower bound for non-adaptive exact
reconstruction from rooted ordinal queries, and
Arvanitakis et al.~\cite{ArvanitakisEtAl26} extended the adaptive
upper bound and the non-adaptive lower bound to unrooted quartet
queries.  In Sect.~\ref{sec:LB}, we give self-contained lower
bounds for randomized reconstruction algorithms and combine them with
the property-testing lower bound to characterize the
reconstruct-or-reject task.

\paragraph{Hereditary property testing}
R\"odl and Schacht~\cite{RodlSchacht07} established testing results 
for colored uniform hypergraphs, and Austin and Tao~\cite{AustinTao10} 
developed a general theory for hereditary properties of multiple directed 
polychromatic hypergraphs.  
We apply the latter framework only after constructing a relabeling-equivariant 
quartet encoding.  
We therefore assign colors according to the relative positions of the taxa 
within each ordered 4-tuple, rather than according to any global ordering of 
the taxon labels.

\paragraph{Organization}
Sect.~\ref{sec:preliminaries} recalls the structural facts
about full quartet systems used throughout the paper.
Sect.~\ref{sec:learn-audit} presents the explicit adaptive and
non-adaptive reconstruct-and-verify testers.
Sect.~\ref{sec:reduction} establishes strong testability by combining
a relabeling-equivariant encoding with hereditary directed-hypergraph
testing.
Sect.~\ref{sec:LB} proves lower bounds for unrestricted testing and
for the stronger reconstruct-or-reject task.
Sect.~\ref{sec:concluding} concludes with the main open questions.

%%%%%%%%%%%%%%%%%%%%%%%%%%%%%%%%%%%%%%%%%%%%%%%%%%%%%%%%%%%%%%%%%%%
\section{Preliminaries}
\label{sec:preliminaries}
%%%%%%%%%%%%%%%%%%%%%%%%%%%%%%%%%%%%%%%%%%%%%%%%%%%%%%%%%%%%%%%%%%%

\begin{thm}[Tree-likeness is hereditary]
\label{thm:treelike-hereditary}
Let $X$ be a finite set of taxa with $|X|\geq 4$, and let $Q$
be a full quartet system on $X$. If $Q$ is tree-like on $X$, then for every subset
$S\subseteq X$ with $|S|\ge 4$, the restricted quartet system
\[
Q|_S(Y):=Q(Y)\;\;\text{for every }\; Y\in\binom{S}{4}
\]
is tree-like on~$S$.    
\end{thm}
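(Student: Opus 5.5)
The natural candidate witness is the restricted tree $T|_S$. Assuming $Q$ is tree-like, fix an unrooted phylogenetic $X$-tree $T$ with $Q=Q_T$ (Definition~\ref{def:tree-likeness}). I would show that $T|_S$ (Definition~\ref{def:restrict-tree}) is an unrooted phylogenetic $S$-tree and that $Q_{T|_S}(Y)=Q_T(Y)$ for every $Y\in\binom S4$. Then $Q|_S=Q_T|_S=Q_{T|_S}$, which is exactly tree-likeness of $Q|_S$ on $S$.

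\emph{Step 1: $T|_S$ is a phylogenetic $S$-tree.} The minimal connected subgraph $T[S]$ is a subtree of $T$, since it is the union of the paths between members of $S$. Its leaves are exactly the elements of $S$: any vertex of degree $1$ in $T[S]$ that is not in $S$ could be deleted, contradicting minimality. Internal vertices of $T$ have degree $3$, so every non-leaf vertex of $T[S]$ has degree $2$ or $3$, and a leaf of $T$ lying in $S$ keeps degree $1$. Suppressing degree-$2$ vertices preserves being a tree and leaves only leaves (labeled bijectively by $S$) and degree-$3$ vertices. Because $|S|\ge4\ge3$, some degree-$3$ vertex survives, so the suppression is well defined and we do not collapse to a single edge.

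\emph{Step 2: quartets agree.} Fix $Y\in\binom S4$. The key fact is transitivity of restriction: $(T|_S)|_Y\cong T|_Y$ as $Y$-trees. The minimal subgraph of $T|_S$ spanning $Y$ corresponds to the minimal subgraph $T[Y]\subseteq T[S]$, with some edges of $T[Y]$ replaced by suppressed paths. Suppressing all degree-$2$ vertices in either order yields the same tree, namely $T[Y]$ with degree-$2$ vertices suppressed. To make this rigorous, I would rely on path-separation rather than on the mechanics of suppression. By Definition~\ref{def:display}, $T$ displays $ab\mid cd$ iff the path from $a$ to $b$ and the path from $c$ to $d$ in $T$ are vertex-disjoint. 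Suppression replaces paths through degree-$2$ vertices by single edges, so it preserves vertex-disjointness of leaf-to-leaf paths in both directions. Hence $ab\mid cd$ is displayed by $T$ iff it is displayed by $T[S]$ iff it is displayed by $T|_S$. This gives $Q_{T|_S}(Y)=Q_T(Y)=Q(Y)$.

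The main obstacle is Step 2: justifying cleanly that display of a quartet coincides with disjointness of the $ab$- and $cd$-paths. I would prove this first as a small lemma. In $T|_Y$, which is a binary tree on $4$ leaves, the internal edge separates $a,b$ from $c,d$ exactly when the two cherry paths are disjoint, and this is invariant under taking the Steiner subtree and suppressing vertices. With the lemma in hand, both steps are routine.
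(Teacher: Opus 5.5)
Your proof is correct. It uses the same witness as the paper, the restricted tree $T|_S$, but justifies the key step with a different lemma.

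The paper argues transitivity of restriction directly. Since $T[Y]\subseteq T[S]$, it says that suppressing the degree-$2$ vertices of $T[S]$, restricting to $Y$, and suppressing again ``has the same effect'' as suppressing in $T[Y]$, so $(T|_S)|_Y\cong T|_Y$. That commutation is asserted rather than proved. The paper also takes for granted that $T|_S$ is a phylogenetic $S$-tree.

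You instead characterize display of $ab\mid cd$ by vertex-disjointness of the $a$--$b$ and $c$--$d$ paths. This criterion is unchanged in passing from $T$ to $T[S]$, because the paths are literally the same. It is also preserved under suppression in both directions. A leaf-to-leaf path through a suppressed vertex must traverse its whole maximal chain of degree-$2$ vertices, so it contains the surviving chain endpoints.

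Your route is more self-contained and turns the one nontrivial step into a checkable local fact. The cost is that you need your Step~1, so the criterion applies to $T|_S$, and you must prove the criterion itself; that proof is the same suppression argument applied to $T|_Y$. The paper's route is shorter, and its argument applies to any $Y\subseteq S$, not just four-sets. However, its justification stays at the level of an assertion.
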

\iffalse
\begin{proof}
Assume that $Q$ is tree-like on $X$. By definition, there exists an unrooted phylogenetic
$X$-tree $T$ such that for every $Y\in\binom{X}{4}$, the quartet topology $Q(Y)$ equals
the topology displayed by~$T$ on~$Y$.

Fix any subset $S\subseteq X$ with $|S|\geq 4$. Let $T|_S$ denote the restriction of~$T$
to leaf set $S$, defined as follows. Take the minimal connected subgraph of $T$ that spans
the leaves in $S$, and then suppress all degree-$2$ vertices (i.e., iteratively replace every
maximal path whose internal vertices have degree $2$ by a single edge). The resulting tree
$T|_S$ is an unrooted phylogenetic $S$-tree.

We claim that $T|_S$ displays $Q|_S$. Let $Y\in\binom{S}{4}$ be arbitrary. Consider the
minimal connected subgraph of $T$ spanning $Y$; this subgraph is contained in the minimal
connected subgraph of $T$ spanning $S$, hence it is also contained in the restriction
construction of $T|_S$. Therefore, when we further restrict $T|_S$ to $Y$ (i.e., take the
minimal connected subgraph spanning $Y$ and suppress degree-$2$ vertices), we obtain the
same quartet tree as when we restrict $T$ directly to~$Y$. In particular, the displayed
quartet topology on $Y$ is unchanged. That is, $(T|_S)|_Y \;=\; T|_Y$ which implies that $\text{quartet}(T|_S,Y) \;=\; \text{quartet}(T,Y)$.
Since $Q(Y)$ is the quartet displayed by $T$ on~$Y$, it follows that $Q|_S(Y)=Q(Y)$ is
also the quartet displayed by $T|_S$ on $Y$. As $Y\in\binom{S}{4}$ is arbitrary, $T|_S$ displays $Q|_S$, and hence $Q|_S$ is tree-like on~$S$.
\end{proof}
\fi
\begin{proof}
Assume that $Q$ is tree-like on $X$. By definition, there exists an
unrooted phylogenetic $X$-tree $T$ such that, for every
$Y\in\binom{X}{4}$, the quartet topology $Q(Y)$ is displayed by~$T$.
Fix any subset $S\subseteq X$ with $|S|\geq4$. We claim that the
restricted tree $T|_S$ displays $Q|_S$.

Let $Y\in\binom{S}{4}$ be arbitrary. Since $Y\subseteq S\subseteq X$, 
restriction is transitive. Indeed, in a tree, the minimal connected 
subgraph spanning a vertex set is the union of the unique paths joining 
its vertices. Since $Y\subseteq S$, every path joining two leaves of~$Y$ 
is contained in the minimal connected subgraph spanning~$S$.  
Consequently, $T[Y]\subseteq T[S]$.  
Starting from $T[S]$, suppressing its degree-$2$ vertices to form
$T|_S$, then restricting to $Y$ and suppressing any newly created
degree-$2$ vertices, has the same effect as starting directly from
$T[Y]$ and suppressing all of its degree-$2$ vertices. Therefore,
\[
(T|_S)|_Y\cong T|_Y,
\]
where $\cong$ denotes isomorphism of leaf-labeled trees. Consequently,
$T|_S$ and $T$ display the same quartet topology on~$Y$.

Since $T$ displays $Q(Y)$ on $Y$, the tree $T|_S$ also displays
$Q(Y)$. Moreover, by the definition of the restricted quartet system,
$(Q|_S)(Y)=Q(Y)$. Thus, $T|_S$ displays $(Q|_S)(Y)$. Because
$Y\in\binom{S}{4}$ was arbitrary, $T|_S$ displays every quartet in
$Q|_S$. Hence $Q|_S$ is tree-like on~$S$.
\end{proof}

\begin{figure}[ht]
\centering
    \begin{tikzpicture}[
    thick,
    scale=0.70,
    leaf/.style={font=\itshape, inner sep=3pt, outer sep=0pt},
    label_rom/.style={font=\small, below=1.8cm}
]

% Define horizontal and vertical spacing between trees
\def\hsep{4.0}
\def\vsep{-5.0}

% --- Helper macros for different tree topologies ---

% Topology A: Asymmetric, split on Top-Left branch
% Arguments: #1=a_pos, #2=e_pos, #3=b_pos, #4=c_pos, #5=d_pos, #6=RomanLabel
\newcommand{\treeAsymTL}[6]{
    \coordinate (C1) at (-0.5, 0); \coordinate (C2) at (0.5, 0);
    \draw[line width=1.2pt] (C1) -- (C2);
    \coordinate (Split) at (-1.1, 0.6);
    \draw[line width=1.2pt] (C1) -- (Split);
    \draw[line width=1.2pt] (C1) -- (-1.1, -0.6) node[leaf, below left] {#3};
    \draw[line width=1.2pt] (Split) -- +(-0.5, 0.4) node[leaf, above left] {#1};
    \draw[line width=1.2pt] (Split) -- +(0.5, 0.4) node[leaf, above right] {#2};
    \draw[line width=1.2pt] (C2) -- +(0.8, 0.7) node[leaf, above right] {#4};
    \draw[line width=1.2pt] (C2) -- +(0.8, -0.7) node[leaf, below right] {#5};
    \node[label_rom] at (0,0) {(#6)};
}

% Topology B: Asymmetric, split on Bottom-Left branch
\newcommand{\treeAsymBL}[6]{
    \coordinate (C1) at (-0.5, 0); \coordinate (C2) at (0.5, 0);
    \draw[line width=1.2pt] (C1) -- (C2);
    \draw[line width=1.2pt] (C1) -- (-1.1, 0.6) node[leaf, above left] {#1};
    \coordinate (Split) at (-1.1, -0.6);
    \draw[line width=1.2pt] (C1) -- (Split);
    \draw[line width=1.2pt] (Split) -- +(-0.5, -0.4) node[leaf, below left] {#2};
    \draw[line width=1.2pt] (Split) -- +(0.5, -0.4) node[leaf, below right] {#3};
    \draw[line width=1.2pt] (C2) -- +(0.8, 0.7) node[leaf, above right] {#4};
    \draw[line width=1.2pt] (C2) -- +(0.8, -0.7) node[leaf, below right] {#5};
    \node[label_rom] at (0,0) {(#6)};
}

% Topology C: Asymmetric, split on Top-Right branch
\newcommand{\treeAsymTR}[6]{
    \coordinate (C1) at (-0.5, 0); \coordinate (C2) at (0.5, 0);
    \draw[line width=1.2pt] (C1) -- (C2);
    \draw[line width=1.2pt] (C1) -- (-0.8, 0.7) node[leaf, above left] {#1};
    \draw[line width=1.2pt] (C1) -- (-0.8, -0.7) node[leaf, below left] {#2};
    \coordinate (Split) at (1.1, 0.6);
    \draw[line width=1.2pt] (C2) -- (Split);
    \draw[line width=1.2pt] (Split) -- +(-0.5, 0.4) node[leaf, above left] {#3};
    \draw[line width=1.2pt] (Split) -- +(0.5, 0.4) node[leaf, above right] {#4};
    \draw[line width=1.2pt] (C2) -- (1.1, -0.6) node[leaf, below right] {#5};
    \node[label_rom] at (0,0) {(#6)};
}

% Topology D: Asymmetric, split on Bottom-Right branch
\newcommand{\treeAsymBR}[6]{
    \coordinate (C1) at (-0.5, 0); \coordinate (C2) at (0.5, 0);
    \draw[line width=1.2pt] (C1) -- (C2);
    \draw[line width=1.2pt] (C1) -- (-0.8, 0.7) node[leaf, above left] {#1};
    \draw[line width=1.2pt] (C1) -- (-0.8, -0.7) node[leaf, below left] {#2};
    \draw[line width=1.2pt] (C2) -- (1.1, 0.6) node[leaf, above right] {#3};
    \coordinate (Split) at (1.1, -0.6);
    \draw[line width=1.2pt] (C2) -- (Split);
    \draw[line width=1.2pt] (Split) -- +(-0.5, -0.4) node[leaf, below left] {#4};
    \draw[line width=1.2pt] (Split) -- +(0.5, -0.4) node[leaf, below right] {#5};
    \node[label_rom] at (0,0) {(#6)};
}

% Topology S: Symmetric
% Arguments: #1=top_mid, #2=left_top, #3=left_bot, #4=right_top, #5=right_bot, #6=RomanLabel
\newcommand{\treeSym}[6]{
    \coordinate (Center) at (0, -0.3);
    \draw[line width=1.2pt] (Center) -- (0, 0.7) node[leaf, above] {#1};
    \coordinate (JL) at (-0.8, -0.9); \coordinate (JR) at (0.8, -0.9);
    \draw[line width=1.2pt] (JL) -- (Center) -- (JR);
    \draw[line width=1.2pt] (JL) -- +(-0.6, 0.6) node[leaf, above left] {#2};
    \draw[line width=1.2pt] (JL) -- +(-0.6, -0.6) node[leaf, below left] {#3};
    \draw[line width=1.2pt] (JR) -- +(0.6, 0.6) node[leaf, above right] {#4};
    \draw[line width=1.2pt] (JR) -- +(0.6, -0.6) node[leaf, below right] {#5};
    \node[label_rom] at (0,0) {(#6)};
}

% --- Drawing the 3x5 Grid of Trees ---

% Row 1
\begin{scope}[yshift=0cm]
    \begin{scope}[xshift=0*\hsep cm] \treeAsymTL{a}{e}{b}{c}{d}{i} \end{scope}
    \begin{scope}[xshift=1*\hsep cm] \treeAsymTL{a}{c}{b}{e}{d}{ii} \end{scope}
    \begin{scope}[xshift=2*\hsep cm] \treeSym{e}{a}{b}{c}{d}{iii} \end{scope}
    \begin{scope}[xshift=3*\hsep cm] \treeAsymTR{a}{b}{e}{c}{d}{iv} \end{scope}
    \begin{scope}[xshift=4*\hsep cm] \treeAsymBR{a}{b}{c}{e}{d}{v} \end{scope}
\end{scope}

% Row 2
\begin{scope}[yshift=1*\vsep cm]
    \begin{scope}[xshift=0*\hsep cm] \treeAsymTL{a}{e}{c}{b}{d}{vi} \end{scope}
    \begin{scope}[xshift=1*\hsep cm] \treeAsymBL{a}{c}{e}{b}{d}{vii} \end{scope}
    \begin{scope}[xshift=2*\hsep cm] \treeSym{e}{a}{c}{b}{d}{viii} \end{scope}
    \begin{scope}[xshift=3*\hsep cm] \treeAsymTR{a}{c}{e}{b}{d}{ix} \end{scope}
    \begin{scope}[xshift=4*\hsep cm] \treeAsymBR{a}{c}{b}{e}{d}{x} \end{scope}
\end{scope}

% Row 3
\begin{scope}[yshift=2*\vsep cm]
    \begin{scope}[xshift=0*\hsep cm] \treeAsymTL{a}{e}{d}{b}{c}{xi} \end{scope}
    \begin{scope}[xshift=1*\hsep cm] \treeAsymBL{a}{d}{e}{b}{c}{xii} \end{scope}
    \begin{scope}[xshift=2*\hsep cm] \treeSym{e}{a}{d}{b}{c}{xiii} \end{scope}
    \begin{scope}[xshift=3*\hsep cm] \treeAsymTR{a}{d}{e}{b}{c}{xiv} \end{scope}
    \begin{scope}[xshift=4*\hsep cm] \treeAsymBR{a}{d}{b}{e}{c}{xv} \end{scope}
\end{scope}

\end{tikzpicture}
\caption{Fifteen possible topologies for a quintet $\{a,b,c,d,e\}$. 
As there are three possible topologies for the quartet $\{a,b,c,d\}$  
and five positions for inserting $e$ on each quartet, there are exactly 
fifteen possible topologies for~$\{a,b,c,d,e\}$.}
\label{fig:quintet_topologies}
\end{figure}

A quintet is a set of five taxa.  There are $15$ binary phylogenetic 
trees on a fixed labeled quintet (see Fig.~\ref{fig:quintet_topologies}).  
We call $Z\in\binom X5$ \emph{resolved with respect to $Q$} if $Q|_Z$ is 
displayed by one of these trees, and \emph{unresolved} otherwise.

\begin{prop}[Theorem~2 in~\cite{CLR11}]
\label{prop:tree-like-quintets_fixed_taxon}
Let $Q$ be a full quartet system on $X$, and fix
$\ell\in X$.  Then $Q$ is tree-like if and only if every quintet
containing $\ell$ is resolved with respect to~$Q$.
\end{prop}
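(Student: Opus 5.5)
The forward direction follows from Theorem~\ref{thm:treelike-hereditary}. If $Q=Q_T$, then for every quintet $Z$ containing $\ell$ the restriction $Q|_Z$ is displayed by $T|_Z$. This is a binary phylogenetic tree on five leaves, so $Z$ is resolved.

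For the converse I would induct on the taxa other than $\ell$, building a tree that displays every quartet containing $\ell$, and then show that it displays all quartets.

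\textbf{Step 1: the $\ell$-quartets encode a rooted triplet system.} Root at $\ell$. For distinct $a,b,c\in X\setminus\{\ell\}$, read $Q(\{\ell,a,b,c\})=ab\mid c\ell$ as the rooted triplet $ab|c$. This gives exactly one resolved triplet on every triple of $X\setminus\{\ell\}$. A full triplet system is displayed by a rooted binary tree if and only if every four-set $\{a,b,c,d\}\subseteq X\setminus\{\ell\}$ carries a consistent set of four triplets. This is the classical four-point characterization of ultrametric/hierarchical triplet systems, and I would reprove it by insertion. Suppose a rooted tree $R$ on $S$ displays all triplets on $S$, and take a new taxon $x$. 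Each triple $\{x,a,b\}$ with $a,b\in S$ forces $x$ either into the clade $\mathrm{lca}(a,b)$ or outside it. Four-set consistency makes these constraints nested, so there is a unique edge of $R$ at which to attach $x$.

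\textbf{Step 2: translate resolved quintets.} A quintet $\{\ell,a,b,c,d\}$ being resolved is exactly the statement that its four $\ell$-quartets form a rooted tree on $\{a,b,c,d\}$. So the hypothesis yields a rooted binary tree $R$ on $X\setminus\{\ell\}$. Let $T$ be $R$ with a pendant edge to $\ell$ attached at the root. Then $T$ agrees with $Q$ on every quartet containing $\ell$.

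\textbf{Step 3: quartets avoiding $\ell$.} Take $Y=\{a,b,c,d\}\not\ni\ell$ and the quintet $Z=Y\cup\{\ell\}$. Since $Z$ is resolved, $Q|_Z=Q_{T'}$ for some five-leaf tree $T'$. Then $T'$ and $T|_Z$ agree on all four quartets of $Z$ containing $\ell$. A five-leaf binary tree is determined by its quartets through any fixed leaf, because those quartets give the rooted triplet structure, which determines the rooted and hence the unrooted tree. Therefore $T'=T|_Z$, and $Q(Y)=Q_{T'}(Y)=Q_T(Y)$.

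I expect the main obstacle to be Step 1: the insertion argument must show that the local triplet constraints always determine a unique attachment edge, with no gaps or conflicts. I would handle it by walking down from the root of $R$. At each internal node $v$ with children $v_1,v_2$, pick $a\in v_1$ and $b\in v_2$ and use the triplet on $\{x,a,b\}$ to decide whether to stop at $v$ or descend into $v_1$ or $v_2$. I would then check that this decision does not depend on the choice of representatives $a,b$, using consistency on four-sets $\{x,a,a',b\}$.
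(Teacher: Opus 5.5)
The paper does not prove this proposition; it imports it as Theorem~2 of~\cite{CLR11}. There is no in-text argument to match, and your proposal is a self-contained proof. It is correct, and it fits the paper's own machinery closely:
your Step~1 is the anchored-triplet encoding of Section~\ref{sec:learn-audit}; your Steps~1--2 replace the global compatibility test of~\cite{AhoSagivSzymanskiUllman81} used in Corollary~\ref{cor:cached-anchor} by the classical fact that a \emph{complete} triplet system is compatible once every four-set is, which is what turns the local quintet hypothesis into a global tree; and your Step~3 is the contrapositive of the Remark after Corollary~\ref{cor:cached-anchor}.
You could shorten Step~3 with Fact~\ref{fact:quintet-distance}: $Q|_Z$ and $Q_{T|_Z}$ are valid quintet words agreeing on the four coordinates containing $\ell$, so they cannot differ on the fifth. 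What your route buys over the citation is that its only nontrivial ingredient is the four-point characterization, which you reprove by insertion.

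Two things to tighten. In Step~2, resolvedness of $\{\ell,a,b,c,d\}$ is not \emph{equivalent} to consistency of its four $\ell$-triplets, since it also constrains $Q(\{a,b,c,d\})$. You only use the forward implication, so state only that.

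In the insertion argument, independence of representatives certifies the triplet on $\{x,a',b'\}$ only when $\mathrm{lca}(a',b')$ lies on the walk path. You must also handle pairs whose lca is strictly below the stopping node $w$, or below the child $s$ of some path node $u$ that the walk did not enter.
\begin{itemize}
\item First case: $a',b'$ lie below one child of $w$. Pick $b$ below the other child. By $R$ and the stopping rule at $w$, the four-set $\{x,a',b',b\}$ carries $a'b'|b$, $a'b|x$, $b'b|x$. Its only rooted realization is $(((a',b'),b),x)$, which forces $a'b'|x$.
\item Second case: pick $a$ below the child of $u$ that the walk entered. By $R$ and the descent rule at $u$, $\{x,a,a',b'\}$ carries $ax|a'$, $ax|b'$, $a'b'|a$. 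It is realized only by $((a,x),(a',b'))$, again forcing $a'b'|x$, as the inserted tree requires.
\end{itemize}
All these checks are licensed by resolved quintets through $\ell$, and with them the proof is complete.
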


\begin{cor}
\label{cor:tree-like-quintets}
A full quartet system $Q$ on $X$ is tree-like if and only if
every quintet in $\binom{X}{5}$ is resolved with respect to~$Q$.
\end{cor}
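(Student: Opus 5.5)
The corollary follows from Theorem~\ref{thm:treelike-hereditary} and Proposition~\ref{prop:tree-like-quintets_fixed_taxon}, and I would prove the two directions separately.

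\emph{Forward direction.} Assume $Q$ is tree-like and fix any quintet $Z\in\binom X5$. Since $|Z|=5\ge 4$, Theorem~\ref{thm:treelike-hereditary} says $Q|_Z$ is tree-like on $Z$. So some unrooted phylogenetic $Z$-tree displays $Q|_Z$. Every unrooted phylogenetic tree on five labeled leaves is one of the fifteen binary topologies of Fig.~\ref{fig:quintet_topologies}, because all internal vertices have degree~$3$. Hence $Z$ is resolved with respect to $Q$. If one prefers not to rely on the figure's enumeration, the tree can be taken to be $T|_Z$ for a tree $T$ displaying $Q$, which is a binary $Z$-tree.

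\emph{Reverse direction.} Assume every quintet in $\binom X5$ is resolved. Since $|X|\ge 4$, the set $X$ is nonempty, so fix an arbitrary $\ell\in X$. In particular, every quintet containing $\ell$ is resolved. Proposition~\ref{prop:tree-like-quintets_fixed_taxon} then gives that $Q$ is tree-like.

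\emph{Edge case $|X|=4$.} Here $\binom X5$ is empty, so the right-hand side holds vacuously. Every full quartet system on four taxa is tree-like, since the single quartet topology $Q(X)$ is displayed by the corresponding quartet tree. The equivalence therefore still holds, and it is also consistent with the vacuous reading of Proposition~\ref{prop:tree-like-quintets_fixed_taxon}.

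The only point needing care is the identification, in the forward direction, of tree-likeness of $Q|_Z$ with displayability by one of the fifteen quintet trees. This is routine because phylogenetic trees here are binary by definition. I expect no real obstacle.
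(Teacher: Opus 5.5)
Your proof is correct and is essentially the paper's intended deduction (the paper states the corollary without proof): the ``if'' direction is exactly Proposition~\ref{prop:tree-like-quintets_fixed_taxon} applied with an arbitrary $\ell$. For the ``only if'' direction, the implicit argument simply reapplies that proposition with $\ell$ chosen inside the given quintet $Z$, whereas you go through Theorem~\ref{thm:treelike-hereditary} and use that the fifteen trees of Fig.~\ref{fig:quintet_topologies} are all the binary $Z$-trees; both work, and yours avoids relying on the cited direction of \cite{CLR11}.
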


\begin{fact}[Fact~1 in~\cite{CLR11}]
\label{fact:quintet-distance}
The full quartet systems induced by any two distinct labeled
quintet trees disagree on at least two of their five quartet entries.
\end{fact}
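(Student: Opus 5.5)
The approach is to show that a labeled quintet tree is already determined by any four of its five quartets. Then two distinct quintet trees cannot agree on four quartets, so they disagree on at least two.

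First, fix the shape. Every binary phylogenetic tree on a quintet $Z$ is a caterpillar: two cherries $\{a,b\}$ and $\{d,e\}$ hang off the two ends of a path, and the remaining taxon $c$ attaches to the middle vertex. In Fig.~\ref{fig:quintet_topologies}, (iii) is this shape drawn with $e$ in the middle, and the fifteen trees correspond to the $5\cdot 3$ choices of middle taxon and cherry split. For such a tree $T$, computing $T|_Y$ for the five four-sets $Y=Z\setminus\{x\}$ gives
\[
Q_T(Z\setminus\{c\})=ab\mid de,\quad Q_T(Z\setminus\{a\})=bc\mid de,\quad Q_T(Z\setminus\{b\})=ac\mid de,
\]
\[
Q_T(Z\setminus\{d\})=ab\mid ce,\quad Q_T(Z\setminus\{e\})=ab\mid cd .
\]

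Second, suppose $T\ne T'$ are quintet trees whose induced systems agree on all but at most one entry, say all entries except $Z\setminus\{x\}$. Then they agree on the four quartets containing $x$. I will show that these four quartets determine $T$, which contradicts $T\ne T'$; note this also covers the case of zero disagreements. Up to the symmetries of the caterpillar (swapping the two cherries, and swapping the two leaves within a cherry), there are only three cases: $x$ is the middle taxon $c$, $x$ lies in a cherry, or the omitted quartet is the one not containing $c$. The third case is the first case with $x=c$ restated, so two cases remain.

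\begin{itemize}
\item If $x=c$, the four quartets are $bc\mid de$, $ac\mid de$, $ab\mid ce$ and $ab\mid cd$. The pairs $\{a,b\}$ and $\{d,e\}$ are exactly the sides avoiding $c$ that occur twice. They are the cherries, and $c$ is the middle, so $T$ is recovered.
\item If $x=a$, the four quartets are $ac\mid de$, $ab\mid de$, $ab\mid ce$ and $ab\mid cd$. The partner of $a$ on the side containing $a$ is $b$ in three of them, so $\{a,b\}$ is a cherry. The quartet $ab\mid de$ then identifies $\{d,e\}$ as the other cherry and $c$ as the middle.
\end{itemize}

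The rule used in each case must be phrased so that it reads only the four given quartets, without presupposing which case holds. I would check this by verifying that the two recipes cannot both apply with different outputs. For example, use the recipe ``if some taxon $y\ne x$ lies on $x$'s side in at least three of the four quartets, then $\{x,y\}$ is a cherry; otherwise $x$ is the middle.'' In the middle case, each partner of $c$ occurs only once, so the rule correctly concludes that $x$ is the middle.

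The only real obstacle is this bookkeeping: making the reconstruction rule well defined from four quartets alone, independent of which case we are in. Since $Z$ has only fifteen trees, a fallback that I would mention is a direct exhaustive check. Fix one tree by symmetry and compare its quartet vector with the other fourteen; each should differ in at least two coordinates. Several, such as trees with the same cherry pair and a different middle, differ in exactly two.
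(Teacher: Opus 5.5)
Your reduction is right: the fact is equivalent to the claim that, for each $x\in Z$, the four entries on the four-sets containing $x$ determine the quintet tree. Your uniform test correctly decides whether $x$ is the middle taxon or lies in the cherry $\{x,y\}$; the test asks whether some $y\neq x$ shares $x$'s side in at least three of the four quartets. Your recovery of the two cherries in the middle case is also fine.

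The gap is the second step of your cherry case. The sentence ``the quartet $ab\mid de$ then identifies $\{d,e\}$ as the other cherry and $c$ as the middle'' does not read only the given data. Once $\{a,b\}$ is a cherry, the entries on $\{a,b,c,d\}$, $\{a,b,c,e\}$, $\{a,b,d,e\}$ are $ab\mid cd$, $ab\mid ce$, $ab\mid de$ whichever of $c,d,e$ is the middle. None of them carries information about the rest of the tree, and singling out $ab\mid de$ presupposes that $c$ is the middle. Concretely, take the tree with cherries $\{a,b\}$, $\{c,e\}$ and middle $d$. Its four quartets containing $a$ are $ad\mid ce$, $ab\mid de$, $ab\mid ce$, $ab\mid cd$. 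Your rule applied to its entry $ab\mid de$ returns cherries $\{a,b\},\{d,e\}$ and middle $c$, which is wrong. So injectivity in the cherry case is not yet established, and that is exactly the bookkeeping you named as the only real obstacle.

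The repair is to use the one informative entry: the quartet on $Z\setminus\{y\}$, where $y$ is the partner found by your test. In your example this is $Z\setminus\{b\}$, with answer $ac\mid de$. Deleting $y$ makes $x$ adjacent to the middle vertex. So in this quartet $x$ is paired with the middle taxon, and the opposite side is the second cherry. This rule reads only the given entries and returns the correct tree in every cherry case. With that change your argument is complete.

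The paper itself gives no proof of this fact; it simply cites~\cite{CLR11}. The exhaustive comparison of the fifteen trees in Fig.~\ref{fig:quintet_topologies}, which you mention as a fallback, is the direct verification. Your argument instead shows that any four entries determine the tree, which explains why the minimum distance is at least two rather than merely checking it. Your example of two trees sharing a cherry but with different middle taxa shows that distance exactly two is attained.
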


There are $3^{\binom{5}{4}} = 243$ full quartet systems on a fixed
quintet.  Exactly $15$ are resolved, so the remaining $228$ are local
forbidden patterns.

%%%%%%%%%%%%%%%%%%%%%%%%%%%%%%%%%%%%%%%%%%%%%%%%%%%%%%%%%%%%%%%%%%%
\section{Explicit Reconstruct-and-Verify Testers}
\label{sec:learn-audit}
%%%%%%%%%%%%%%%%%%%%%%%%%%%%%%%%%%%%%%%%%%%%%%%%%%%%%%%%%%%%%%%%%%%

We first give an explicit tester whose dependence on both $n$ and
$\varepsilon$ is moderate.  The construction separates the task into
two phases.  A structured set of adaptive queries reconstructs one candidate
tree, and fresh uniformly random queries verify that candidate against
the entire input.

%==================================================================
\subsection{Anchored triplets and exact learning}
%==================================================================

Fix an anchor taxon $r\in X$.  For distinct
$a,b,c\in X\setminus\{r\}$, one quartet query defines a rooted-triplet
answer on $\{a,b,c\}$ by
\[
Q(\{a,b,c,r\})=ab\mid cr \Rightarrow ab\mid c, \;\; 
Q(\{a,b,c,r\})=ac\mid br \Rightarrow ac\mid b, \;\; 
Q(\{a,b,c,r\})=bc\mid ar \Rightarrow bc\mid a.
\]
\begin{figure}[htbp]
\centering
\begin{tikzpicture}[
    line cap=round,
    line join=round,
    branch/.style={line width=0.85pt},
    leaf/.style={circle,fill=black,inner sep=1.45pt},
    inner/.style={circle,fill=black,inner sep=1.65pt},
    anchorleaf/.style={leaf,fill=red!75!black},
    marked/.style={
        circle,
        draw=blue!70!black,
        fill=white,
        line width=1pt,
        inner sep=2.05pt
    },
    taxon/.style={font=\large\itshape},
    note/.style={font=\small,align=center}
]

%------------------------------------------------
% (i) Anchored unrooted quartet ab|cr
%------------------------------------------------
\begin{scope}
    \coordinate (u) at (0.8,0);
    \coordinate (v) at (2.2,0);
    \coordinate (a) at (0,0.72);
    \coordinate (b) at (0,-0.72);
    \coordinate (c) at (3.0,0.72);
    \coordinate (r) at (3.0,-0.72);

    \draw[branch] (a)--(u)--(v)--(c);
    \draw[branch] (b)--(u);
    \draw[branch] (v)--(r);

    \node[leaf]       at (a) {};
    \node[leaf]       at (b) {};
    \node[leaf]       at (c) {};
    \node[anchorleaf] at (r) {};
    \node[inner]      at (u) {};
    \node[marked]     at (v) {};

    \node[taxon,above left=1pt and 3pt of a] {$a$};
    \node[taxon,below left=1pt and 3pt of b] {$b$};
    \node[taxon,above right=1pt and 3pt of c] {$c$};
    \node[taxon,text=red!75!black,
          below right=1pt and 3pt of r] {$r$};
    \node[taxon,text=blue!70!black,
          above left=1pt and 2pt of v] {$v$};

    \node[note] at (1.5,-1.68)
        {(i) anchored quartet $ab\mid cr$};
\end{scope}

% Transformation arrow
\draw[
    -{Stealth[length=2.8mm,width=2mm]},
    line width=0.9pt
]
(3.65,0) --
node[above=1mm,note] {\footnotesize delete $r$\\[-1.5pt]\footnotesize and root at $v$}
(5.15,0);

%------------------------------------------------
% (ii) Resulting rooted triple ab|c
%------------------------------------------------
\begin{scope}
    \coordinate (rho) at (6.45,0.75);
    \coordinate (w)   at (5.65,-0.10);
    \coordinate (aa)  at (5.05,-0.92);
    \coordinate (bb)  at (6.25,-0.92);
    \coordinate (cc)  at (7.35,-0.10);

    \draw[branch] (rho)--(w);
    \draw[branch] (rho)--(cc);
    \draw[branch] (w)--(aa);
    \draw[branch] (w)--(bb);

    \node[marked] at (rho) {};
    \node[inner]  at (w) {};
    \node[leaf]   at (aa) {};
    \node[leaf]   at (bb) {};
    \node[leaf]   at (cc) {};

    \node[note,text=blue!70!black,
          above=5pt of rho] {root $v$};
    \node[taxon,below left=1pt and 3pt of aa] {$a$};
    \node[taxon,below right=1pt and 3pt of bb] {$b$};
    \node[taxon,right=4pt of cc] {$c$};

    \node[note] at (6.2,-1.68)
        {(ii) rooted triple $ab\mid c$};
\end{scope}

\end{tikzpicture}

\caption{Anchoring a quartet. For the fixed anchor $r$, deleting
$r$ from the quartet $ab\mid cr$ and rooting the remaining tree at
the former neighbor $v$ of $r$ produces the rooted triple
$ab\mid c$. The filled red vertex denotes the anchor, and the
outlined blue vertex denotes $v$ before and after the
transformation. The other two rules follow by permuting $a,b,c$.}
\label{fig:anchored-quartet-triplet}
\end{figure}
If $Q$ is induced by an unrooted tree $T$, delete the leaf $r$
and root the remaining tree at the former neighbor of~$r$ (see
Fig.~\ref{fig:anchored-quartet-triplet}).  The three rules above return
exactly the rooted triplets displayed by this rooted binary tree.

\begin{lem}[Anchored exact learning]
\label{lem:anchored-learning}
Let $Q$ be a tree-like full quartet system on an $n$-taxon set.
For any fixed $r\in X$, the unique tree inducing $Q$ can be
reconstructed by a deterministic algorithm using at most
\[
L(n):=\left\lceil (n-1)\lg(n-1)\right\rceil
\]
adaptive quartet queries, all of which contain~$r$.
\end{lem}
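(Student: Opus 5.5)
Via the anchored-triplet simulation above, the claim reduces to exact reconstruction of a rooted binary tree from noiseless adaptive triplet queries. Delete $r$ from the unique tree $T$ inducing $Q$ and root the remainder at the former neighbour of $r$. This gives a rooted binary phylogenetic tree $R$ on the $m:=n-1$ taxa of $X\setminus\{r\}$. Each rooted-triplet query on $\{a,b,c\}$ is answered by the single quartet query $Q(\{a,b,c,r\})$, which contains $r$, and the three rules return exactly the triplet displayed by $R$. Conversely, $R$ determines $T$: reattach $r$ by a pendant edge at the root, after first subdividing the root's incident edge structure appropriately so all internal vertices have degree~$3$. Uniqueness of $T$ follows because the displayed quartets determine the tree~\cite{colonius1981tree}. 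It therefore suffices to reconstruct $R$ with at most $\lceil m\lg m\rceil$ triplet queries.

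For the rooted reconstruction I would use insertion with a separator (centroid) search, as in~\cite{WuKLY08,EmamjomehZadehKempe18}. Insert the taxa one at a time, maintaining the rooted restriction $R|_{S}$ on the current inserted set $S$; this restriction is correct by Theorem~\ref{thm:treelike-hereditary} applied to rooted triplets. To insert a new taxon $x$, we must find the edge of $R|_S$, or the new-root position above the root, where $x$ attaches. Maintain a candidate subtree $C$ of edges, initially all edges together with the root edge. Pick a centroid-like vertex $v$ of $C$ whose removal splits $C$ into parts each containing at most about half of the candidate edges. Let $v$ have children subtrees containing leaves $a$ and $b$. The query $\{a,b,x\}$ then says whether $x$ lies below the left child ($ax\mid b$), below the right child ($bx\mid a$), or outside the subtree of $v$ ($ab\mid x$). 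Each answer localizes $x$ to one of the three components around $v$, so one query shrinks the candidate set to a constant fraction.

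The main obstacle is the exact constant. The target is $\sum_{k}\lceil\lg(\text{\#positions})\rceil$ summing to at most $\lceil m\lg m\rceil$, which a three-way split at a balanced vertex does not immediately give with base-2 logarithms. I would first try a weighted-edge centroid argument. Inserting the $k$-th taxon has $2k-1$ candidate positions, counting the root edge. Choosing $v$ so that each of the three resulting edge-sets has at most half of the current candidates gives at most $\lceil\lg(2k-1)\rceil\le \lg k+1$ queries, which is slightly too weak. If that fails, I would instead start from the known bound in~\cite{EmamjomehZadehKempe18}, where each query at least halves a suitably potential-weighted candidate set. That gives $\lg$ of the total number of rooted binary trees, $\lg((2m-3)!!)\le m\lg m$, per reconstruction. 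Equivalently, I would charge queries across insertions via an amortized halving argument rather than per insertion, and then take the ceiling. Determinism is clear, since every query choice depends only on previous answers.
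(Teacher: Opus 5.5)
Your proposal is correct and is essentially the paper's proof. You simulate each rooted-triplet query on $X\setminus\{r\}$ by the anchored quartet query $Q(\{a,b,c,r\})$, invoke the deterministic $N\lg N$ bound of Emamjomeh-Zadeh and Kempe~\cite{EmamjomehZadehKempe18} with $N=n-1$, and reattach $r$ at the root; that root already has exactly two children, so no subdivision is needed. If you want your own centroid argument to reach the constant, measure progress by the number $\ell$ of leaves of the candidate planted subtree rather than by its edges: each centroid query maps $\ell\mapsto\lceil \ell/2\rceil$, so insertion into a $k$-leaf tree costs $\lceil\lg k\rceil$ queries, and $\sum_{k=2}^{m-1}\lceil\lg k\rceil\le (m-1)\lg(m-1)$ overall.
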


\begin{proof}
Emamjomeh-Zadeh and Kempe~\cite{EmamjomehZadehKempe18} give a
deterministic algorithm that reconstructs a rooted binary tree on $N$
leaves from at most $N\lg N$ adaptive ordinal (rooted-triplet)
queries.  Apply their algorithm with $N=n-1$ and simulate every
triplet query by the anchored quartet query above.  The learned rooted
tree is precisely the tree obtained from $T$ by deleting~$r$.
Attaching $r$ to its root recovers~$T$.
\end{proof}

\paragraph{Remark} The learned rooted tree is precisely the tree obtained from~$T$ 
by deleting the anchor leaf~$r$, retaining its former neighbor as the root. 
Since this root records the original attachment point of~$r$, attaching a new leaf 
labeled~$r$ directly to the root and then forgetting the root orientation 
recovers~$T$ up to leaf-label-preserving isomorphism.

The exact-learning guarantee above applies when the queried
quartets are induced by a tree.  The tester, however, must also be
well defined on an arbitrary full quartet system, whose answers may
be mutually inconsistent.  We therefore run the exact learner for at
most~$L(n)$ queries.  If the observed query--answer transcript causes
the learner to reach a case not covered by its rules, or if the
learner has not returned a rooted binary tree within this budget, we
use an arbitrary rooted binary tree on $X\setminus\{r\}$ as the
candidate.  This modification has no effect when~$Q$ is tree-like,
because Lemma~\ref{lem:anchored-learning} then guarantees that the
learner returns the correct rooted tree within~$L(n)$ queries.  Thus,
the learning phase always produces a candidate within the stated
budget. Correctness of that candidate is needed only in the
tree-like case.

%==================================================================
\subsection{The adaptive tester}
%==================================================================

\paragraph{Adaptive reconstruct-and-verify tester}
\label{alg:learn-audit}
\begin{enumerate}[leftmargin=*,label=\arabic*.]
\item Fix $r\in X$ and run the anchored exact learner using at
most $L(n)$ quartet queries.
\item Attach $r$ to the root of the returned rooted tree to obtain
an unrooted candidate $\widehat T$. 
\item Set
\[
m=m_\star(\varepsilon,\delta):=
\left\lceil
\frac{\ln(1/\delta)}{\ln(1/(1-\varepsilon))}
\right\rceil.
\]
Independently sample $Y_1,Y_2,\ldots,Y_m$ uniformly from $\binom X4$,
with replacement.
\item Query the sampled quartets.  Return \textsf{reject} if
$Q(Y_i)\ne Q_{\widehat T}(Y_i)$ for any $i$. Otherwise, return
\textsf{accept}.
\end{enumerate}

\begin{thm}[Adaptive reconstruct-and-verify tester]
\label{thm:learn-audit}
The reconstruct-and-verify procedure above is an adaptive one-sided
tester for full quartet tree-likeness.  It runs in polynomial time and
makes at most
\[
\left\lceil (n-1)\log_2(n-1)\right\rceil+
\left\lceil
\frac{\ln(1/\delta)}{\ln(1/(1-\varepsilon))}
\right\rceil
\]
queries.  In particular, its query complexity is
$O(n\log n+\varepsilon^{-1}\log(1/\delta))$.
\end{thm}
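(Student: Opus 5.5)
The plan is to verify four properties separately: the query bound, one-sided completeness, soundness against $\varepsilon$-far inputs, and polynomial running time. Only soundness needs a real argument.

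\emph{Query count.} Phase~1 is truncated at $L(n)=\lceil (n-1)\lg(n-1)\rceil$ queries by construction, so it never exceeds its budget. On a non-tree-like input where the learner stalls or reaches an uncovered case, we substitute an arbitrary rooted binary tree, which costs no further queries. Phase~2 makes exactly $m_\star(\varepsilon,\delta)$ queries. This gives the stated sum.

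For the asymptotic form we use $\ln(1/(1-\varepsilon))\ge\varepsilon$, which follows from $-\ln(1-\varepsilon)=\sum_k\varepsilon^k/k\ge\varepsilon$. Hence
\[
m_\star\le \frac{\ln(1/\delta)}{\varepsilon}+1 .
\]
The additive constant is absorbed, and the total is $O(n\log n+\varepsilon^{-1}\log(1/\delta))$.

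\emph{Completeness (one-sided error).} Suppose $Q=Q_T$ is tree-like. By Lemma~\ref{lem:anchored-learning}, the learner returns, within $L(n)$ queries, the rooted tree obtained from $T$ by deleting $r$ and rooting at $r$'s former neighbor. By the Remark, attaching $r$ to that root gives $\widehat T\cong T$, so $Q_{\widehat T}=Q_T=Q$. Every verification query therefore agrees with the input, and the tester accepts with probability~$1$.

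\emph{Soundness.} Suppose $Q$ is $\varepsilon$-far from $\mathcal P_{\text{tree}}$. The key point is that $\widehat T$, whatever it is, is a phylogenetic $X$-tree: the fallback tree and any tree the learner outputs are binary trees on $X\setminus\{r\}$, and attaching $r$ at the root gives degree-$3$ internal vertices. So $Q_{\widehat T}\in\mathcal P_{\text{tree}}$, and therefore
\[
\operatorname{dist}(Q,Q_{\widehat T})\ge\varepsilon .
\]
The samples $Y_i$ are drawn independently of the Phase-1 transcript. Conditioning on $\widehat T$, each $Y_i$ independently hits a disagreement with probability at least $\varepsilon$. The probability of acceptance is thus at most $(1-\varepsilon)^m$. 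Since $m\ge \ln(1/\delta)/\ln(1/(1-\varepsilon))$, this is at most $\delta$. The main subtlety is this step: the candidate is adaptive and data-dependent, so the argument relies on the distance bound holding uniformly over all trees and on using fresh samples.

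\emph{Running time.} The Emamjomeh-Zadeh--Kempe learner runs in polynomial time, and each simulated query is $O(1)$ work. Computing $Q_{\widehat T}(Y_i)$ takes polynomial time: we find the path structure among four leaves, for instance via pairwise path intersections or LCA computations in $\widehat T$, which costs $O(n)$ per quartet. Together with the polynomial-size sampling, the whole procedure runs in polynomial time.
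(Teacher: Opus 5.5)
Your proposal is correct and follows the paper's proof: completeness comes from Lemma~\ref{lem:anchored-learning}, and soundness comes from conditioning on the learned candidate $\widehat T$ and using fresh uniform samples, which bounds the acceptance probability by $(1-\varepsilon)^m\le\delta$. You also make explicit two points the paper leaves implicit: $\widehat T$ is always a phylogenetic $X$-tree, even in the fallback case, so $\varepsilon$-farness applies to it, and $\ln(1/(1-\varepsilon))\ge\varepsilon$ gives the stated $O(\cdot)$ form.
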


\begin{proof}
Suppose first that $Q=Q_T$ for a phylogenetic $X$-tree $T$.
Lemma~\ref{lem:anchored-learning} implies
$\widehat T=T$.  Every verification query therefore agrees with
$Q_{\widehat T}$, so the algorithm accepts with probability~$1$.

Now suppose that $Q$ is $\varepsilon$-far from tree-likeness.
Condition on the entire learning transcript and hence on the
particular candidate $\widehat T$ that it produces.  By the definition
of~$\varepsilon$-farness,
\[
\Pr_{Y\sim\operatorname{Unif}(\binom{X}{4})}
   [Q(Y)\ne Q_{\widehat T}(Y)]\geq\varepsilon.
\]
The verification samples are fresh and independent of the
learning phase.  Consequently, the probability that all~$m$ samples
miss the disagreement set is at most $(1-\varepsilon)^m\leq \delta$. 
This bound holds for every conditioned transcript and therefore
also unconditionally.  The displayed query bound follows by adding
the reconstruction and verification costs.  Both phases are implementable in
polynomial time.
\end{proof}

%==================================================================
\subsection{A cached-anchor non-adaptive variant}
%==================================================================

\begin{cor}
\label{cor:cached-anchor}
For $n\geq 5$, there is a polynomial-time non-adaptive one-sided
tester using at most
\[
\binom{n-1}{3}+
\left\lceil
\frac{\ln(1/\delta)}{\ln(1/(1-\varepsilon))}
\right\rceil
\]
queries.
\end{cor}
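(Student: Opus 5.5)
The plan is to turn the adaptive tester of Theorem~\ref{thm:learn-audit} into a non-adaptive one. Fix the anchor $r\in X$ in advance and query, all at once, every four-set containing $r$. There are exactly $\binom{n-1}{3}$ such sets $\{a,b,c,r\}$ with $\{a,b,c\}\in\binom{X\setminus\{r\}}{3}$. Alongside them, query $m=m_\star(\varepsilon,\delta)$ independent uniform samples $Y_1,\ldots,Y_m$ from $\binom X4$. All queries are fixed before any answer is seen, so the tester is non-adaptive, and the total count is exactly the stated bound.

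\emph{Offline reconstruction.} The cached answers determine, through the anchoring rules, a rooted-triplet answer for every triple in $X\setminus\{r\}$. Run the deterministic learner of Lemma~\ref{lem:anchored-learning} offline, answering each of its triplet queries by table lookup, so no further oracle queries are made. As before, if the learner reaches an uncovered case or exceeds its budget, output an arbitrary rooted binary tree on $X\setminus\{r\}$. Then attach $r$ at the root to obtain $\widehat T$. Finally, reject if $Q(Y_i)\ne Q_{\widehat T}(Y_i)$ for some $i$, and accept otherwise. (One could also check directly that the cached anchored quartets are consistent with $\widehat T$, but this is not needed.) Everything runs in polynomial time: the table has $O(n^3)$ entries, the learner is polynomial, and computing $Q_{\widehat T}(Y)$ takes polynomial time.

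\emph{Correctness.} If $Q=Q_T$, the simulated learner sees exactly the answers it would have received adaptively. Hence $\widehat T=T$ by Lemma~\ref{lem:anchored-learning}, and every verification query agrees, so the tester accepts with probability~$1$.

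If $Q$ is $\varepsilon$-far, the argument is the same as in the proof of Theorem~\ref{thm:learn-audit}. The candidate $\widehat T$ is a deterministic function of the cached anchored answers, which do not depend on the random samples. So conditioned on $\widehat T$, the samples are still independent and uniform, and each one hits the disagreement set, which has density at least $\varepsilon$. The probability of accepting is therefore at most $(1-\varepsilon)^m\le\delta$.

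\emph{Main obstacle.} The only delicate point is making sure the offline simulation is well defined on arbitrary, possibly inconsistent inputs. This is handled by the fallback rule. The assumption $n\ge5$ guarantees $n-1\ge4$, so that the anchored triplet structure on $X\setminus\{r\}$ is nontrivial and the count $\binom{n-1}{3}$ makes sense as stated.
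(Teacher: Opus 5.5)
Your proof is correct, but it takes a different route from the paper's. You use a generic adaptive-to-non-adaptive conversion. Every query the learner of Lemma~\ref{lem:anchored-learning} could make is among the $\binom{n-1}{3}$ anchored four-sets, so you cache them all and simulate the learner offline with the same fallback. You then verify on uniform samples from all of $\binom X4$, and soundness is exactly the argument of Theorem~\ref{thm:learn-audit}, because $\widehat T$ is a fixed function of $Q$.

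The paper instead treats the cached answers as a complete rooted-triplet system and runs the Aho--Sagiv--Szymanski--Ullman compatibility algorithm. It rejects at once if the system is incompatible; otherwise it builds the unique tree $\widehat T$, which agrees with \emph{every} anchored quartet. All of the at least $\varepsilon\binom n4$ disagreements therefore lie among unanchored four-sets. So the paper samples only from $\binom{X\setminus\{r\}}{4}$, which needs $n\ge 5$, and gets per-sample detection probability at least $\varepsilon\binom n4/\binom{n-1}4\ge\varepsilon$.

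Your route is simpler and needs no compatibility subroutine. It also works for $n\ge4$, so your stated reason for the hypothesis $n\ge5$ is not really what that hypothesis is for. The paper's route spends no verification samples on four-sets whose answers are already cached. It also attaches a local certificate to every rejection: either an incompatible triplet system, or, as in the Remark after the corollary, an unresolved quintet $Y\cup\{r\}$. In your version, $\widehat T$ need not match the anchored answers the learner never looked up, and a fallback tree is arbitrary. A failed verification sample then shows only that $Q$ is not tree-like, without a small witness.
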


\begin{proof}
Fix $r\in X$ and, before receiving any answers, select all
$\binom{n-1}{3}$ quartets containing $r$ together with $m$ uniformly
random four-sets from $\binom{X\setminus\{r\}}4$, where
$m=m_\star(\varepsilon,\delta)$.  The anchored
answers form a complete rooted-triplet system on
$X\setminus\{r\}$.  Use the standard rooted-tree compatibility
algorithm~\cite{AhoSagivSzymanskiUllman81} to reject if this system is
incompatible. Otherwise reconstruct its unique rooted binary tree,
attach $r$ to the root, and call the resulting tree $\widehat T$.

If $Q$ is tree-like, the rooted-triplet system is compatible and
the reconstructed tree is the inducing tree, so the tester never
rejects.  Otherwise, whenever the reconstruction phase does not
reject, $\widehat T$ agrees with every anchored quartet.  If $Q$ is
$\varepsilon$-far from the tree-likeness, at least $\varepsilon\binom n4$ unanchored
quartets disagree with $\widehat T$. Hence, a uniform unanchored verification query
finds a disagreement with probability at least $\varepsilon$.
The same calculation as in Theorem~\ref{thm:learn-audit} gives failure
probability at most~$\delta$.  All query locations were selected in
advance, and hence the tester is non-adaptive.
\end{proof}

\paragraph{Remark}
The preceding corollary can also be interpreted through quintets.
Suppose that the cached anchored system is compatible, so that the
reconstruction phase produces a candidate tree $\widehat T$ agreeing
with every queried quartet containing~$r$.  Let
$Y=\{a,b,c,d\}\in\binom{X\setminus\{r\}}4$ be a verification four-set
for which $Q(Y)\neq Q_{\widehat T}(Y)$, and put $Z=Y\cup\{r\}$.  
Of the five four-subsets of $Z$, exactly four
contain $r$.  On each of these four anchored coordinates, $Q|_Z$
agrees with the valid quintet word induced by $\widehat T|_Z$, whereas
on the fifth coordinate, namely $Y$, it disagrees.  Hence $Q|_Z$
is at Hamming distance exactly one from the valid quintet word
$Q_{\widehat T}|_Z$.  By Fact~\ref{fact:quintet-distance}, two distinct
valid quintet words cannot differ in only one coordinate.  Therefore
$Q|_Z$ is unresolved. 
The cached-anchor tester queries all
$\binom{n-1}{3}=\Theta(n^3)$ anchored quartets once and reuses their
answers for every verification four-set.  It consequently uses
$\Theta(n^3)+O(\varepsilon^{-1}\log(1/\delta))$ queries, rather than
the $O(n^3/\varepsilon)$ queries of~\cite{CLR11}.

%%%%%%%%%%%%%%%%%%%%%%%%%%%%%%%%%%%%%%%%%%%%%%%%%%%%%%%%%%%%%%%%%%%
\section{Constant-Query Testability via a Relabeling-Equivariant Reduction}
\label{sec:reduction}
%%%%%%%%%%%%%%%%%%%%%%%%%%%%%%%%%%%%%%%%%%%%%%%%%%%%%%%%%%%%%%%%%%%

The preceding explicit testers prioritize quantitatively
moderate query bounds and polynomial-time implementability, at the
cost of an $n$-dependent reconstruction term.  We now pursue the
complementary goal of eliminating the dependence on~$n$.  
Using a relabeling-equivariant encoding and a hereditary reduction, 
we establish constant-query testability, although the resulting 
dependence on $\varepsilon$ is not quantitatively practical.

%==================================================================
\subsection{A relabeling-equivariant encoding}
%==================================================================

A numerical color assigned to an \emph{unordered} four-set after
sorting its taxon labels is not invariant under arbitrary vertex
relabeling: a permutation may interchange the meanings of the three
colors.  We therefore encode quartet positions rather than a global
taxon order.  Let
\[
X_{\neq}^4:=\{(x_1,x_2,x_3,x_4)\in X^4:
x_i\neq x_j\text{ for }i\neq j\}.
\]
For a full quartet system $Q$, color every injective ordered
$4$-tuple by
\[
c_Q(x_1,x_2,x_3,x_4)=
\begin{cases}
0,&Q(\{x_1,x_2,x_3,x_4\})=x_1x_2\mid x_3x_4,\\
1,&Q(\{x_1,x_2,x_3,x_4\})=x_1x_3\mid x_2x_4,\\
2,&Q(\{x_1,x_2,x_3,x_4\})=x_1x_4\mid x_2x_3.
\end{cases}
\]
Write $H_Q=(X,c_Q)$ for the resulting directed,
three-colored $4$-ary structure.  Its colors on the $24$ orderings of
a fixed four-set satisfy the natural quartet permutation identities
and are all determined by the one oracle answer on that four-set.

If $\pi:X\to X'$ is a bijection and $\pi Q$ is the relabeled
quartet system, then
\[
c_{\pi Q}(\pi x_1,\pi x_2,\pi x_3,\pi x_4)
=c_Q(x_1,x_2,x_3,x_4).
\]
Thus the encoding is equivariant under every relabeling, with no
distinguished order on the taxa.  It also commutes with restriction:
$H_{Q|_S}=H_Q[S]$ for every $S\subseteq X$.

\begin{lem}[Distance preservation under the encoding]
\label{lem:distance-preservation}
For any two full quartet systems $Q,Q'$ on the same taxon set, 
\[
\frac{|\{\mathbf{x}\in X_{\neq}^4:c_Q(\mathbf{x})\neq
c_{Q'}(\mathbf{x})\}|}{|X_{\neq}^4|}
=\operatorname{dist}(Q,Q').
\]
\end{lem}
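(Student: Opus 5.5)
The plan is a double-counting argument that groups the injective ordered $4$-tuples by their underlying four-set. The map $\mathbf{x}=(x_1,x_2,x_3,x_4)\mapsto\{x_1,x_2,x_3,x_4\}$ sends $X_{\neq}^4$ onto $\binom X4$, and every fiber has exactly $4!=24$ elements. Hence
\[
|X_{\neq}^4| = 24\binom n4 .
\]
Since $c_Q(\mathbf{x})$ depends only on $Q$ evaluated at the underlying four-set and on the ordering, the disagreement set of the encodings is a union of fibers. It therefore suffices to show, for each fixed $Y\in\binom X4$, that either all $24$ orderings of $Y$ disagree or none do.

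The key step is the following claim. For each fixed ordering $\mathbf{x}$ of $Y$, the map $\mathcal Q(Y)\to\{0,1,2\}$, $q\mapsto$ (the color $\mathbf{x}$ receives when the answer on $Y$ is $q$), is a bijection. To check it, note that the three topologies $x_1x_2\mid x_3x_4$, $x_1x_3\mid x_2x_4$, $x_1x_4\mid x_2x_3$ are exactly the three elements of $\mathcal Q(Y)$. They are distinct because the partner of $x_1$ differs in each, and they exhaust $\mathcal Q(Y)$. Consequently the definition of $c_Q$ covers every case, and distinct answers on $Y$ produce distinct colors on $\mathbf{x}$. Thus, for every ordering $\mathbf{x}$ of $Y$,
\[
c_Q(\mathbf{x})\neq c_{Q'}(\mathbf{x}) \iff Q(Y)\neq Q'(Y).
\]

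Summing over fibers, the number of disagreeing tuples is
\[
24\,\Bigl|\{Y\in\tbinom X4 : Q(Y)\neq Q'(Y)\}\Bigr| .
\]
Dividing by $24\binom n4$ gives $\operatorname{dist}(Q,Q')$, which is the claimed identity.

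There is no real obstacle here. The only point needing care is the injectivity of the color map for every ordering, not just the sorted one, and the observation that $x_1$'s partner distinguishes the three topologies handles that uniformly.
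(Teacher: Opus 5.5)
Your proposal is correct and follows the paper's proof essentially step for step. Both arguments partition $X_{\neq}^4$ into the $24$-element fibers $\operatorname{Ord}(Y)$, observe that for every ordering of $Y$ the three topologies receive three distinct colors (so the colors disagree on all of $\operatorname{Ord}(Y)$ exactly when $Q(Y)\neq Q'(Y)$), and divide $24|D|$ by $24\binom{n}{4}$; your remark that the partner of $x_1$ separates the three topologies just makes explicit the distinctness step that the paper asserts directly.
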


\begin{proof}
For each $Y\in\binom X4$, let
\[
\operatorname{Ord}(Y)
:=
\{\mathbf{x}\in X_{\ne}^4:
\{x_1,x_2,x_3,x_4\}=Y\}
\]
be the set of all orderings of the elements of~$Y$.  Since $Y$ has
four distinct elements, $|\operatorname{Ord}(Y)|=4!=24$.
Moreover, the sets $\operatorname{Ord}(Y)$, over
$Y\in\binom X4$, form a partition of $X_{\ne}^4$.

Fix $Y\in\binom X4$ and an ordering
$\mathbf{x}=(x_1,x_2,x_3,x_4)\in\operatorname{Ord}(Y)$.  Relative to
this ordering, the three possible quartet topologies on $Y$,
\[
x_1x_2\mid x_3x_4,\qquad
x_1x_3\mid x_2x_4,\qquad
x_1x_4\mid x_2x_3,
\]
are assigned the three distinct colors $0,1,2$, respectively.
Consequently, $c_Q(\mathbf{x})\neq c_{Q'}(\mathbf{x})
 \text{ if and only if }\allowbreak Q(Y)\neq Q'(Y)$. 
Thus, if $Q$ and $Q'$ agree on $Y$, their colors agree on all
$24$ orderings of~$Y$. Whereas if they disagree on~$Y$, their colors disagree
on all $24$ orderings. Let
\[
D:=\left\{Y\in\binom{X}{4}:Q(Y)\neq Q'(Y)\right\}.
\]
It follows that $\bigl|\{\mathbf{x}\in X_{\neq}^4:
c_Q(\mathbf{x})\neq c_{Q'}(\mathbf{x})\}\bigr|
=24|D|$. Similarly, $|X_{\neq}^4| =24\binom{|X|}{4}$. 
Therefore,
\[
\frac{|\{\mathbf{x}\in X_{\neq}^4:
c_Q(\mathbf{x})\ne c_{Q'}(\mathbf{x})\}|}
{|X_{\neq}^4|}
=
\frac{24|D|}{24\binom{|X|}{4}}
=
\frac{|D|}{\binom{|X|}{4}}
=
\operatorname{dist}(Q,Q').
\]
\end{proof}

%==================================================================
\subsection{Local patterns and strong testability}
%==================================================================

On a fixed labeled five-set, let $\mathcal E_5$ denote the
$3^{\binom54}=3^5=243$ directed, three-colored $4$-ary structures
arising from full quartet systems.  Let
$\mathcal R_5\subseteq\mathcal E_5$ be the $15$ structures induced by
the $15$ labeled quintet trees, and put
$\mathcal F_5:=\mathcal E_5\setminus\mathcal R_5$.  Hence
$|\mathcal F_5|=228$.  Corollary~\ref{cor:tree-like-quintets} gives
the following local characterization:
\[
Q\text{ is tree-like}
\;\text{ if and only if }\,
\text{for every }Z\in\binom{X}{5}, \,
H_Q[Z]\cong R\text{ for some }R\in\mathcal R_5.
\]

Let $\mathcal{P}_{\text{enc}}:= \{H_Q:Q\in\mathcal{P}_{\text{tree}}\}$ 
denote the encoded tree-likeness property. The quartet permutation identities are local
constraints on four vertices, while exclusion of induced copies of
the structures in~$\mathcal{F}_5$ is a local constraint on five
vertices.  Consequently, $\mathcal{P}_{\text{enc}}$ is an
isomorphism-invariant hereditary property of directed, three-colored
$4$-ary structures.  By Lemma~\ref{lem:distance-preservation}, the
distance from~$H_Q$ to $\mathcal{P}_{\text{enc}}$ is exactly the quartet
distance from~$Q$ to tree-likeness.

\begin{prop}[Existential constant-query tester]
\label{prop:exist}
For every $\varepsilon,\delta\in(0,1)$, full quartet
tree-likeness has a one-sided tester using
$q_{\text{her}}(\varepsilon,\delta)$ queries, where the bound is
independent of $n=|X|$.
\end{prop}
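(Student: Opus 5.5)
The plan is to deduce the statement from the Austin--Tao theorem on hereditary properties of directed polychromatic hypergraphs~\cite{AustinTao10}, applied to $\mathcal P_{\text{enc}}$, and then to turn the resulting sampling tester for $H_Q$ into a quartet-query tester for $Q$. I would argue in three steps.

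\emph{Step 1: fit the property into the Austin--Tao framework.} I would check that $\mathcal P_{\text{enc}}$, viewed inside the class $\mathcal E$ of all encoded structures $H_Q$, is a hereditary property of directed $3$-colored $4$-uniform hypergraphs in their sense. Their objects are colorings of injective ordered $k$-tuples, possibly with several colorings and arities; here we have one $4$-ary coloring. I will verify three things:
\begin{itemize}
\item Isomorphism invariance follows from the equivariance identity $c_{\pi Q}(\pi\mathbf x)=c_Q(\mathbf x)$.
\item Closure under induced substructures follows from $H_{Q|_S}=H_Q[S]$ together with Theorem~\ref{thm:treelike-hereditary}.
\item For the ambient class: a structure lies in $\mathcal P_{\text{enc}}$ if and only if it obeys the $24$-ordering permutation identities on every $4$-set and avoids induced copies of $\mathcal F_5$ on every $5$-set. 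This is Corollary~\ref{cor:tree-like-quintets}.
\end{itemize}
So $\mathcal P_{\text{enc}}$ is the set of finite structures omitting a finite forbidden family $\mathcal F$. That family consists of the $4$-vertex patterns violating the permutation identities plus the $228$ patterns in $\mathcal F_5$. It is hereditary, and Austin--Tao then gives a one-sided tester.

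\emph{Step 2: quote the testability theorem.} Austin--Tao's result says the following: for every hereditary property $\mathcal P$ of such structures and every $\varepsilon>0$, there is $k=k(\varepsilon)$ such that any $H$ that is $\varepsilon$-far from $\mathcal P$ satisfies $\Pr_{S}[H[S]\notin\mathcal P]\ge 1/2$ (or some positive $\eta(\varepsilon)$) for a uniformly random $k$-set $S$.

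The distance used there is the fraction of injective $4$-tuples whose color must change. One subtlety is that modifications in their setting may leave the class $\mathcal E$. I would handle this by noting that a modified structure satisfying $\mathcal P_{\text{enc}}$ automatically obeys the permutation identities, so it equals $H_{Q^\star}$ for some tree-like $Q^\star$. Lemma~\ref{lem:distance-preservation} then says the tuple distance equals the quartet distance. Hence $Q$ being $\varepsilon$-far implies $H_Q$ is $\varepsilon$-far.

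\emph{Step 3: build the tester.} Sample $t=\lceil \ln(1/\delta)/\ln(1/(1-\eta))\rceil$ independent random $k$-sets $S_j$. Query all $\binom{k}{4}$ quartets inside each, which is enough to compute $H_Q[S_j]=H_{Q|_{S_j}}$. Reject if some $Q|_{S_j}$ has an unresolved quintet.
\begin{itemize}
\item One-sidedness: if $Q$ is tree-like, every restriction is tree-like by Theorem~\ref{thm:treelike-hereditary}, so the tester always accepts.
\item Soundness: if $Q$ is $\varepsilon$-far, each round rejects with probability at least $\eta$, so all rounds accept with probability at most $(1-\eta)^t\le\delta$.
\item Query count: $q_{\text{her}}=t\binom{k}{4}$, which is independent of $n$.
\end{itemize}
For $n<k$ the tester queries everything and decides exactly.

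The main obstacle is Step 2, specifically matching the hypotheses and distance notion of~\cite{AustinTao10}: whether their edits range over all colorings or only within $\mathcal E$, and whether their directed setting matches our ordered-tuple encoding. Step 1 was set up precisely to make this match exact, by letting $\mathcal F$ include the permutation-identity violations.
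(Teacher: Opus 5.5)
Your proposal is correct and follows essentially the same route as the paper. The paper encodes $Q$ as $H_Q$, formalized with palette $K_4=\{0,1,2\}$ and dummy singleton palettes $K_j$ for $j\neq 4$; it then invokes the Austin--Tao hereditary testing theorem, transfers $\varepsilon$-farness via Lemma~\ref{lem:distance-preservation}, and amplifies by repeating the random $t$-set test with $\binom{t}{4}$ queries per round, with one-sidedness from Theorem~\ref{thm:treelike-hereditary}. Your explicit forbidden-family description (including the $4$-vertex permutation-identity violations) and your observation that any target in $\mathcal P_{\text{enc}}$ is itself an encoding $H_{Q^\star}$ make precise what the paper states more briefly when it calls $\mathcal P_{\text{enc}}$ hereditary and identifies the two distances.
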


\begin{proof}
Let $\mathbf K=(K_j)_{j\geq 0}$, where $K_j$ specifies the available colors 
for ordered $j$-tuples, be the finite palette with
$K_4=\{0,1,2\}$ and with $K_j$ a singleton\footnote{Since each $K_j$ must be 
a finite nonempty set, we set a singleton component here that can be just a dummy color. 
This encodes no additional information, contributes nothing to the distance and requires no queries.} 
for every $j\neq 4$.
Then each $H_Q$ is a $\mathbf K$-colored directed hypergraph in the
framework of Austin and Tao.  Since $\mathcal{P}_{\text{enc}}$ is a
hereditary $\mathbf K$-property, their hereditary testing
theorem~\cite{AustinTao10} implies that it is strongly testable.

More explicitly, for every $\varepsilon>0$, there exist
$t=t(\varepsilon)$ and $\rho=\rho(\varepsilon)>0$ such that, whenever
$H_Q$ is $\varepsilon$-far from $\mathcal{P}_{\text{enc}}$, a uniformly
random $t$-element set $U\subseteq X$ satisfies $H_Q[U]\notin\mathcal{P}_{\text{enc}}$ 
with probability at least~$\rho$.  The distance used in that theorem
equals our normalized quartet distance. That is, for any full quartet systems
$Q,Q'$, 
\[
\frac{
 |\{Y\in\binom X4:H_Q[Y]\neq H_{Q'}[Y]\}|
}{
 \binom n4
}
=\operatorname{dist}(Q,Q'),
\]
as also follows from Lemma~\ref{lem:distance-preservation}.

Independently repeating the test
\[
R=
\left\lceil
\frac{\ln(1/\delta)}
     {\ln\!\left((1-\rho)^{-1}\right)}
\right\rceil
\]
times makes the probability of failing to reject an
$\varepsilon$-far input at most $\delta$.  For each sampled set $U$,
querying its $\binom t4$ unordered quartets determines the entire
induced structure $H_Q[U]$, because one quartet-oracle answer
determines the colors of all $24$ ordered representations of that
four-set.  Hence the total number of quartet-oracle queries is at most
\[
q_{\rm her}(\varepsilon,\delta)
=
R\binom{t(\varepsilon)}4,
\]
which is independent of $n$.  If $n<t(\varepsilon)$, the tester
instead queries the whole input.  Finally, the tester is one-sided
because every induced subsystem of a tree-like quartet system is
tree-like.
\end{proof}

\paragraph{Canonical form of the tester}
Equivalently, there are functions $t=t(\varepsilon)$ and
$\rho=\rho(\varepsilon)>0$ such that if $Q$ is
$\varepsilon$-far, a uniform $t$-set $U\subseteq X$ contains an
unresolved quintet with probability at least $\rho$.  Independently
sample
\[
R=\left\lceil
\frac{\ln(1/\delta)}{\ln(1/(1-\rho))}
\right\rceil
\]
such sets, query the $\binom t4$ quartets induced by each, and
reject if any induced subsystem is not tree-like.  The query
complexity is 
\[
R\binom{t(\varepsilon)}4
=O\!\left(t(\varepsilon)^4\log(1/\delta)\right),
\]
independent of~$n$.  For $n<t(\varepsilon)$, the tester may
simply query the entire input.

\paragraph{Quantitative limitation}
The available proofs of hereditary directed-hypergraph testing
pass through hypergraph regularity and induced-removal machinery.
They provide no moderate explicit dependence on $\varepsilon$ and may
lead to tower-type or even larger, Ackermann-type bounds.
Proposition~\ref{prop:exist} should therefore be interpreted as an
existential $n$-independent result.  In contrast,
Theorem~\ref{thm:learn-audit} is explicit and polynomial-time, with
linear dependence on $1/\varepsilon$, at the price of an additive
$O(n\log n)$ learning term.

\paragraph{Remark} The dense incompatibility constructions of
Alon--Snir--Yuster~\cite{AlonSnirYuster14} concern partial quartet
sets and direct sampling from those sets.  Here every four-set has
exactly one topology, and the canonical tester samples taxa and reads
the complete induced subsystem.  The two results therefore address
different access models.

%%%%%%%%%%%%%%%%%%%%%%%%%%%%%%%%%%%%%%%%%%%%%%%%%%%%%%%%%%%%%%
\section{Lower Bounds and the Role of Adaptivity}
\label{sec:LB}
%%%%%%%%%%%%%%%%%%%%%%%%%%%%%%%%%%%%%%%%%%%%%%%%%%%%%%%%%%%%%%

We first prove a lower bound for property testing without
reconstruction in which the tester is required only to distinguish tree-like
systems from $\varepsilon$-far systems and need not output an inducing
tree.  
We then study the stronger reconstruct-or-reject task solved by the
explicit algorithms of Sect.~\ref{sec:learn-audit}.  This distinction
is essential. Proposition~\ref{prop:exist} gives an $n$-independent
tester for fixed $\varepsilon$ and $\delta$, so no $n$-dependent lower
bound can hold for ordinary property testing in that regime.

%==================================================================
\subsection{A lower bound for property testing}
%==================================================================

The lower bound is obtained by randomly perturbing a fixed
tree-like quartet system.  Each quartet is changed independently with
probability $p=(1+\gamma)\varepsilon$, where $\gamma>0$ provides the
slack needed to make the perturbed system $\varepsilon$-far with high
probability.  On the other hand, a tester making at most $q$ queries
sees exactly the same transcript as on the original tree-like system
with probability at least $(1-p)^q$.  Taking $n\to\infty$ and then
$\gamma\downarrow0$ yields the leading constant.

\begin{thm}[Finite-$n$ testing lower bound]
\label{thm:lower-bound}
Fix $\varepsilon\in(0,2/3)$,
$\delta\in(0,1/2)$, and $\gamma>0$ such that $p:=(1+\gamma)\varepsilon\leq \frac{2}{3}$.
Let $\mathcal A$ be a randomized, possibly adaptive tester
that accepts every tree-like input with probability at least
$1-\delta$, accepts every $\varepsilon$-far input with probability at
most $\delta$, and makes at most $q$ queries.  Set
$N=\binom n4$ and 
\[
\eta_n(\gamma):=(2n-5)!!
\exp\!\left(
-\frac{\gamma^2\varepsilon}{2(1+\gamma)}N
\right).
\]
Then
\[
(1-\delta)(1-p)^q\leq \delta+\eta_n(\gamma).
\]
If $\mathcal A$ is a one-sided-error tester, then the stronger inequality $(1-p)^q\leq \delta+\eta_n(\gamma)$ 
holds.  Consequently, whenever
$\eta_n(\gamma)<1-2\delta$, every such tester satisfies
\[
q\geq
\frac{\displaystyle
\ln\!\left(\frac{1-\delta}{\delta+\eta_n(\gamma)}\right)}
{\displaystyle-\ln(1-(1+\gamma)\varepsilon)}.
\]
If $\mathcal A$ is a one-sided-error tester, then this bound
strengthens to 
\[
q\geq
\frac{\displaystyle
\ln\!\left(\frac{1}{\delta+\eta_n(\gamma)}\right)}
{\displaystyle-\ln(1-(1+\gamma)\varepsilon)}.
\]
\end{thm}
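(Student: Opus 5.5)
Fix an arbitrary phylogenetic $X$-tree $T_0$ and let $Q_0:=Q_{T_0}$, which is tree-like. Build a random system $\widetilde Q$ by independently replacing $Q_0(Y)$, for each $Y\in\binom X4$, with probability $p$ by a topology in $\mathcal Q(Y)\setminus\{Q_0(Y)\}$. The constraint $p\le 2/3$ lets us choose the replacement so that, for every fixed tree $T$, the events $\{\widetilde Q(Y)= Q_T(Y)\}$ are independent across $Y$ and each has probability at most $1-p$. One convenient choice is to resample $Y$ with probability $\tfrac32 p$ uniformly from all three topologies (which may keep $Q_0(Y)$); then each entry equals any fixed topology with probability at most $\max\{1-p,\,p/2\}=1-p$, because $p\le 2/3$. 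The proof then compares the behaviour of $\mathcal A$ on $Q_0$ and on $\widetilde Q$.

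\emph{Step 1 (farness).} Fix a tree $T$. The number of $Y$ with $\widetilde Q(Y)\ne Q_T(Y)$ is a sum of $N$ independent indicators, each with mean at least $p=(1+\gamma)\varepsilon$. By a multiplicative Chernoff lower-tail bound, $\Pr[\text{sum}<\varepsilon N]\le\exp(-\tfrac{\theta^2}{2}\mu)$ with $\mu\ge(1+\gamma)\varepsilon N$ and $1-\theta=1/(1+\gamma)$, i.e.\ $\theta=\gamma/(1+\gamma)$. This gives $\exp\bigl(-\gamma^2\varepsilon N/(2(1+\gamma))\bigr)$. If the means exceed $p$, monotonicity handles it: couple with the case of mean exactly $p$, or apply Chernoff with $\mu$ replaced by the lower bound. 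There are $(2n-5)!!$ unrooted binary phylogenetic $X$-trees, so a union bound gives $\Pr[\widetilde Q\text{ not }\varepsilon\text{-far}]\le\eta_n(\gamma)$.

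\emph{Step 2 (transcript coupling).} Run $\mathcal A$ with fixed internal randomness on $\widetilde Q$. As long as every queried entry is unperturbed, the run coincides with the run on $Q_0$. Reveal perturbation coins lazily, in query order; each fresh coin shows "no change" with probability at least $1-p$, even under adaptivity, because coins of distinct $Y$ are independent and repeated queries cost nothing. Hence $\Pr[\text{all }\le q\text{ queried entries unperturbed}]\ge(1-p)^q$, conditionally on the internal randomness. Call this event $G$. On $G$, the output equals the output on $Q_0$. Therefore
\[
\Pr[\mathcal A(\widetilde Q)=\textsf{accept}]\ \ge\ \Pr[G\wedge \mathcal A(Q_0)\text{ accepts}]\ \ge\ (1-p)^q(1-\delta).
\]
This holds because, conditioned on the internal coins, $G$ is determined by the perturbation coins and the acceptance of $Q_0$ by the internal coins. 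Averaging then gives the product. In the one-sided case, $\mathcal A(Q_0)$ accepts always, which gives $(1-p)^q$.

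\emph{Step 3 (upper bound and conclusion).} For the upper bound, split on farness: $\Pr[\mathcal A(\widetilde Q)\text{ accepts}]\le \Pr[\text{accept}\mid\text{far}]\Pr[\text{far}]+\Pr[\text{not far}]\le\delta+\eta_n(\gamma)$. Combining with Step 2 gives both displayed inequalities. Taking logarithms and dividing by $-\ln(1-p)>0$ yields the bounds on $q$; the hypothesis $\eta_n<1-2\delta$ ensures the numerator is positive. The main care point is Step 2: we must make sure that adaptivity and the internal randomness do not break the product bound. The lazy-revelation argument handles this, and I expect it to be the only delicate step. The resampling construction in the first paragraph is the point where $p\le2/3$ is used.
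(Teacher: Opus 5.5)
Your proposal is correct and follows essentially the same route as the paper. Your $\tfrac32 p$-resampling is exactly the paper's perturbation (keep $Q_0(Y)$ with probability $1-p$, otherwise pick uniformly among the other two topologies), and you then use the same Chernoff-plus-union-bound farness step over the $(2n-5)!!$ trees and the same shared-coins transcript coupling. The only cosmetic differences are that you argue via lazy revelation rather than fixing the clean run's query set $S_R$, and that you bound $\Pr[\mathcal A(\widetilde Q)\text{ accepts}]$ from both sides instead of $\Pr[\mathcal A(Q)\text{ accepts}\wedge F]$.
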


\begin{proof}
Fix a tree-like full quartet system $Q^\star$.
Independently for every $Y\in\binom X4$, define a random full quartet
system $Q$ by retaining $Q^\star(Y)$ with probability $1-p$ and,
with probability $p$, choosing uniformly between the other two
topologies.
Fix a phylogenetic tree $T$ and write
\[
D_T:=|\{Y\in\binom X4:Q(Y)\ne Q_T(Y)\}|.
\]
If $Q_T(Y)=Q^\star(Y)$, the disagreement probability is
$p$.  Otherwise, $Q(Y)$ equals $Q_T(Y)$ with probability $p/2$, so
the disagreement probability is $1-p/2\ge p$, where the final
inequality uses $p\le2/3$.  Thus
$\mu_T:=\mathbb E[D_T]\ge pN$.  With
$\theta=\gamma/(1+\gamma)$, a multiplicative Chernoff bound for the
independent indicator random variables (see~Theorem 4.5 in~\cite{mitzenmacher2017probability}) gives
\[
\begin{aligned}
\Pr[D_T<\varepsilon N]
&\le
\Pr[D_T<(1-\theta)\mu_T]  \\
&\le
\exp\!\left(-\frac{\theta^2\mu_T}{2}\right)
\le
\exp\!\left(
-\frac{\gamma^2\varepsilon}{2(1+\gamma)}N
\right).
\end{aligned}
\]
Every tree-like full quartet system is induced by a labeled
unrooted binary tree, and there are $(2n-5)!!$ such trees 
(see~\cite[Sect.~2.1]{SempleSteel2003}).  A union
bound therefore gives 
\[
\Pr[\operatorname{dist}(Q,\mathcal{P}_{\text{tree}})<\varepsilon]
\leq \eta_n(\gamma).
\]

We next couple executions of $\mathcal A$ on $Q^\star$ and
on $Q$ by using the same internal random coins $R$.  Fix a value of
$R$ for which the clean execution on $Q^\star$ accepts, and let
$S_R$ be the set of distinct quartet coordinates queried during that
execution.  We have $|S_R|\le q$.  If no coordinate in $S_R$ is
corrupted, then adaptivity causes no difficulty: the two executions
receive the same answer to every query, make the same subsequent
queries, and finally produce the same accepting output.  The
probability of this event is
\[
(1-p)^{|S_R|}\geq (1-p)^q.
\]
The clean execution accepts for a set of random coins of
probability at least $1-\delta$.  Averaging the last bound over those
coins yields 
\[
\Pr_{Q,R}[\mathcal A(Q)\text{ accepts and the transcripts coincide}]
\geq (1-\delta)(1-p)^q.
\]
Let $F$ be the event that $Q$ is $\varepsilon$-far.  Since
$\Pr[F^c]\leq \eta_n(\gamma)$, intersecting the preceding event with
$F$ loses probability at most $\eta_n(\gamma)$. Indeed, 
let $G$ denote the event that $\mathcal A(Q)$ accepts and
the two transcripts coincide.  Since $G\cap F^c\subseteq F^c$ and
$\Pr[F^c]\leq\eta_n(\gamma)$,
\[
\begin{aligned}
\Pr_{Q,R}[\mathcal A(Q)\text{ accepts}\wedge F]
&\geq \Pr[G\cap F]\\
&\geq \Pr[G]-\Pr[F^c]\\
&\geq (1-\delta)(1-p)^q-\eta_n(\gamma).
\end{aligned}
\]
For every fixed full quartet system $\hat{Q}$, let $a(\hat{Q}):=\Pr_R[\mathcal{A}(\hat{Q})\text{ accepts}]$.
By soundness, $a(\hat{Q})\le\delta$ whenever $\hat{Q}$ is $\varepsilon$-far.  Therefore
\[ 
\Pr_{Q,R}[\mathcal{A}(Q)\text{ accepts }\wedge F] =\mathbb E_Q\!\left[\mathbf 1_F a(Q)\right] \leq \delta\Pr[F] \leq \delta.
\]
Combining the upper and lower bounds gives
\[
\delta\geq (1-\delta)(1-p)^q-\eta_n(\gamma),
\]
which proves the first inequality.  If $\mathcal A$ is a one-sided-error tester, then it accepts
the clean input $Q^\star$ with probability~$1$ over its internal
randomness $R$.  Repeating the same coupling with the factor
$1-\delta$ replaced by~$1$ gives \[
(1-p)^q\leq\delta+\eta_n(\gamma).
\]
Taking logarithms proves the two stated lower bounds on~$q$.
\end{proof}

\begin{cor}[An asymptotic form of the lower bound]
\label{cor:sharp-testing-lb}
For fixed $\varepsilon\in(0,2/3)$ and
$\delta\in(0,1/2)$, every family of randomized, possibly adaptive
two-sided testers satisfies 
\[
\liminf_{n\to\infty}q(n,\varepsilon,\delta)
\geq
\frac{\ln((1-\delta)/\delta)}
     {\ln(1/(1-\varepsilon))}.
\]
If completeness is one-sided, then 
\[
\liminf_{n\to\infty}q(n,\varepsilon,\delta)
\geq
\frac{\ln(1/\delta)}
     {\ln(1/(1-\varepsilon))}.
\]
Hence, as $\varepsilon,\delta\rightarrow 0$ after
$n\to\infty$, the lower bound is
\[
(1-o(1))\,\frac{\ln(1/\delta)}{\varepsilon}.
\]
\end{cor}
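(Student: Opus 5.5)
The plan is to derive the corollary directly from Theorem~\ref{thm:lower-bound} by taking limits in the right order: first $n\to\infty$ with $\gamma>0$ fixed, then $\gamma\downarrow 0$. Fix $\varepsilon\in(0,2/3)$ and $\delta\in(0,1/2)$, and restrict attention to $\gamma>0$ small enough that $p=(1+\gamma)\varepsilon\le 2/3$; such $\gamma$ exist because $\varepsilon<2/3$ strictly.

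The first step is to show $\eta_n(\gamma)\to 0$ as $n\to\infty$ for each fixed such $\gamma$. We have $(2n-5)!!\le (2n)^n=\exp(O(n\log n))$, whereas $N=\binom n4=\Theta(n^4)$. Hence the exponent $-\frac{\gamma^2\varepsilon}{2(1+\gamma)}N$ dominates, and $\ln\eta_n(\gamma)\le O(n\log n)-c_\gamma n^4\to-\infty$. In particular, $\eta_n(\gamma)<1-2\delta$ for all large $n$, since $\delta<1/2$, so the displayed bounds on $q$ in Theorem~\ref{thm:lower-bound} apply. Taking $\liminf$ and using the continuity of $\ln$ at the positive limits $(1-\delta)/\delta$ and $1/\delta$ gives
\[
\liminf_{n\to\infty}q(n,\varepsilon,\delta)\ge\frac{\ln((1-\delta)/\delta)}{-\ln(1-(1+\gamma)\varepsilon)}
\]
for two-sided testers, and the analogous bound with numerator $\ln(1/\delta)$ for one-sided testers. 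The left-hand side does not depend on $\gamma$, so letting $\gamma\downarrow 0$ and using the continuity of $x\mapsto-\ln(1-x)$ at $x=\varepsilon<1$ yields the two stated bounds.

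For the final asymptotic claim, use $\ln(1/(1-\varepsilon))=\varepsilon+O(\varepsilon^2)=(1+o(1))\varepsilon$ as $\varepsilon\to 0$. Also $\ln((1-\delta)/\delta)=\ln(1/\delta)+\ln(1-\delta)=(1-o(1))\ln(1/\delta)$ as $\delta\to 0$, because $\ln(1/\delta)\to\infty$. The ratio is therefore $(1-o(1))\ln(1/\delta)/\varepsilon$ under both error models.

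There is no real obstacle here; the only points needing care are the order of limits and the need for $\eta_n<1-2\delta$ so that the logarithmic form of Theorem~\ref{thm:lower-bound} is valid. The factorial-versus-$n^4$ comparison handles the latter. One could also apply the inequality $(1-\delta)(1-p)^q\le\delta+\eta_n$ directly and pass to the limit, which avoids the side condition entirely.
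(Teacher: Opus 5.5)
Your proposal is correct and follows the paper's own proof essentially step for step. You show $\eta_n(\gamma)\to0$ by comparing $\ln((2n-5)!!)=O(n\ln n)$ with $N=\Theta(n^4)$, invoke Theorem~\ref{thm:lower-bound} once $\eta_n(\gamma)<1-2\delta$, let $\gamma\downarrow0$, and then obtain the asymptotic form from $\ln((1-\delta)/\delta)/\ln(1/\delta)\to1$ and $\varepsilon/\ln(1/(1-\varepsilon))\to1$.
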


\begin{proof}
Fix an admissible $\gamma>0$.  Since
\[
\ln \eta_n(\gamma)
=
\ln((2n-5)!!)
-
\frac{\gamma^2\varepsilon}{2(1+\gamma)}N,
\]
and $\ln((2n-5)!!)=O(n\ln n)$ whereas
$N=\Theta(n^4)$, 
$\eta_n(\gamma)\to 0$ as
$n\to\infty$.  In particular, since $(1-\delta)(1-p)^q\leq \delta+\eta_n(\gamma)$, we have 
\[
(1-p)^q
\leq
\frac{\delta+\eta_n(\gamma)}{1-\delta}. 
\]
Since $\delta<1/2$ and $\eta_n(\gamma)\to0$, we have
$\eta_n(\gamma)<1-2\delta$ for all sufficiently large~$n$.
Theorem~\ref{thm:lower-bound} therefore gives
\[
\liminf_{n\to\infty}q(n,\varepsilon,\delta)
\geq
\frac{\ln((1-\delta)/\delta)}
     {-\ln(1-(1+\gamma)\varepsilon)}.
\]
Since this holds for every admissible $\gamma>0$, letting
$\gamma\to 0$ gives
\[
\liminf_{n\to\infty}q(n,\varepsilon,\delta)
\geq
\frac{\ln((1-\delta)/\delta)}
     {\ln(1/(1-\varepsilon))}.
\]
Applying the one-sided-error bound in
Theorem~\ref{thm:lower-bound} in exactly the same way gives
\[
\liminf_{n\to\infty}q(n,\varepsilon,\delta)
\geq
\frac{\ln(1/\delta)}
     {\ln(1/(1-\varepsilon))}.
\]
Finally, as $\delta\to0$,
\[
\frac{\ln((1-\delta)/\delta)}{\ln(1/\delta)}
=
1+\frac{\ln(1-\delta)}{\ln(1/\delta)}
\longrightarrow 1,
\]
since $\ln(1-\delta)=-\delta+O(\delta^2)$.  Moreover, as
$\varepsilon\to0$,
\[
\frac{\varepsilon}{-\ln(1-\varepsilon)}
=
\frac{1}{1+\varepsilon/2+O(\varepsilon^2)}
\longrightarrow 1.
\]
Consequently,
\[
\begin{aligned}
\frac{\ln((1-\delta)/\delta)}
     {-\ln(1-\varepsilon)}
&=
\frac{\ln((1-\delta)/\delta)}
     {\ln(1/\delta)}
\cdot
\frac{\varepsilon}
     {-\ln(1-\varepsilon)}
\cdot
\frac{\ln(1/\delta)}{\varepsilon}\\
&=
(1-o(1))\,
\frac{\ln(1/\delta)}{\varepsilon},
\end{aligned}
\]
as $(\varepsilon,\delta)\to(0,0)$.  Here, $o(1)$ tends to zero
under every joint approach $(\varepsilon,\delta)\to(0,0)$.
The same asymptotic form follows for the one-sided-error lower bound,
with the first factor omitted.
\end{proof}

The one-sided bound in Corollary~\ref{cor:sharp-testing-lb} matches, 
up to rounding, the verification sample count $m_\star(\varepsilon,\delta)$ 
in Theorem~\ref{thm:learn-audit} as~$n\to\infty$.  It also applies to
non-adaptive testers, since they form a subclass of adaptive testers.
Thus the random-verification term has the optimal leading constant.
This result does not force an $n$-dependent cost for ordinary property
testing; the remaining question is whether the existential
$n$-independent tester can be made explicit with a moderate dependence
on~$\varepsilon$.

%==================================================================
\subsection{The stronger reconstruct-or-reject task}
%==================================================================

\begin{defn}[Reconstruct-or-reject algorithm]
\label{def:reconstruct-or-reject}
A randomized quartet-query algorithm is a
\emph{$(\varepsilon,\delta)$ reconstruct-or-reject algorithm} if it
outputs either a labeled phylogenetic tree or \textsf{reject} and
satisfies the following conditions:
\begin{itemize}
\item if $Q=Q_T$ is tree-like, it outputs the unique inducing
tree $T$ with probability at least $1-\delta$; and
\item if
$\operatorname{dist}(Q,\mathcal P_{\rm tree})\geq\varepsilon$, it
outputs \textsf{reject} with probability at least $1-\delta$. 
\end{itemize}
It is \emph{one-sided} if it outputs $T$ with probability~$1$
on every tree-like input.  No requirement is imposed on inputs that
are neither tree-like nor $\varepsilon$-far.
\end{defn}

The algorithms of Theorem~\ref{thm:learn-audit} and
Corollary~\ref{cor:cached-anchor} satisfy this stronger definition
without further queries. Clearly, whenever they would accept, they output the
candidate tree~$\widehat T$.

\begin{thm}[Reconstruction lower bounds]
\label{thm:reconstruction-lb}
Let $\delta\in[0,2/3)$ and assume that $n\geq 5$.  Every reconstruct-or-reject
algorithm on $n$ taxa that uses at most $q$ quartet queries satisfies
the following bounds.
\begin{enumerate}[leftmargin=*,label=(\alph*)]
\item If the algorithm is adaptive, then 
\[
q\geq
\log_3\!\left((1-\delta)(2n-5)!!\right)
=\Omega(n\log n).
\]
\item If the algorithm is non-adaptive, then 
\[
q\geq
\left(1-\frac{3\delta}{2}\right)
\frac{\binom n4}{4n-15}
=\Omega(n^3).
\]
\end{enumerate}
\end{thm}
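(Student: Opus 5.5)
\textbf{Part (a): counting leaves of query decision trees.} We use Yao-style averaging over a uniformly random tree-like input. Let $T$ be uniform over the $(2n-5)!!$ labeled unrooted binary phylogenetic $X$-trees, and feed $Q_T$ to the algorithm. Fix the algorithm's internal coins $R$. The algorithm then becomes a deterministic ternary decision tree of depth at most $q$, since each quartet query has three possible answers. Hence it has at most $3^q$ leaves, so it can output at most $3^q$ distinct labeled trees. Because distinct trees induce distinct quartet systems, each output tree is correct for at most one $T$. Therefore, for each fixed $R$, $\Pr_T[\text{output}=T]\le 3^q/(2n-5)!!$, and averaging over $R$ gives the same bound. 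On the other hand, correctness on every tree-like input gives $\Pr_{T,R}[\text{output}=T]\ge 1-\delta$. Combining, $(1-\delta)(2n-5)!!\le 3^q$, which is (a). The estimate $\Omega(n\log n)$ follows from $\ln((2n-5)!!)=\Theta(n\log n)$ and $\delta<2/3$.

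\textbf{Part (b): indistinguishable triples of trees.} A non-adaptive algorithm fixes, for each coin value $R$, a query set $S_R$ with $|S_R|\le q$ before seeing any answers. The hard instances come from a triple construction. Pick three distinct taxa $a,b,c$ and an unrooted binary tree $T_0$ on the remaining $n-3$ taxa, rooted at some edge. Form three trees $T^{(1)},T^{(2)},T^{(3)}$ by attaching a four-leaf skeleton with pendant groups $\{a\},\{b\},\{c\}$ and the subtree $T_0$. The three trees correspond to the three resolutions of the central edge, that is, $ab\mid c\,T_0$, $ac\mid b\,T_0$, and $bc\mid a\,T_0$.

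The key structural step is to check that these three trees induce identical quartet topologies except on quartets containing one leaf from each of the four groups. So the disagreement set is
\[
D=\{\{a,b,c,x\}: x\in X\setminus\{a,b,c\}\},\qquad |D|=n-3\le 4n-15\quad(n\ge 5).
\]
I expect this verification to be routine. Any quartet with two leaves in the same group sees the same restricted tree in all three variants, because the only change is the resolution of the central four-way split.

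\textbf{Averaging argument.} Randomize by applying a uniformly random relabeling of $X$ to the triple and choosing $j\in\{1,2,3\}$ uniformly. Under the random relabeling, $D$ becomes a random set whose marginal satisfies $\Pr[Y\in D]=(n-3)/\binom n4$ for each fixed $Y$. Hence for every fixed $S_R$,
\[
\Pr[S_R\cap D\neq\emptyset]\le \frac{q(n-3)}{\binom n4}\le\frac{q(4n-15)}{\binom n4}.
\]
On the event $S_R\cap D=\emptyset$, the transcript, and therefore the output, is the same for all three $T^{(j)}$. Since $j$ is uniform and independent of the transcript, the output equals $T^{(j)}$ with probability at most $1/3$. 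Thus the error probability is at least $\tfrac23\bigl(1-q(4n-15)/\binom n4\bigr)$. This must be at most $\delta$, because the averaged error is bounded by the worst-case error. Rearranging gives $q\ge (1-\tfrac{3\delta}{2})\binom n4/(4n-15)$, which is $\Omega(n^3)$ since $\delta<2/3$.

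\textbf{Main obstacle.} I expect the delicate point to be the interaction between the random relabeling and the algorithm's coins. The fix is to take the input distribution independent of $R$, bound the failure probability for each fixed $R$, and then average over $R$, exactly as in part (a). The quartet-disagreement computation for the triple is the only tree-combinatorial step that needs care. If the intended constant $4n-15$ arises from a different, larger family of moves, the same averaging goes through with that $|D|$.
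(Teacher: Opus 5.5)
Your proof is correct. Part (a) is the same decision-tree counting argument as in the paper.

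Part (b) uses the same indistinguishability-plus-averaging strategy as the paper, but with a different gadget and different bookkeeping.

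\textbf{The paper's gadget.} The paper hangs a pendant subtree on a four-set $C$ and varies its internal topology among three rooted shapes. A query can then separate the three trees whenever it contains at least three taxa of $C$. Each query touches $1+4(n-4)=4n-15$ such sets $C$. The paper proves the per-$C$ bound $P_C\ge 1-\tfrac{3\delta}{2}$ and combines it with the double count $\sum_C P_C\le q(4n-15)$.

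\textbf{Your gadget.} You use a single nearest-neighbor interchange around the edge that cuts off the cluster $\{a,b,c\}$. Only the $n-3$ quartets containing all of $a,b,c$ separate the three trees. Your uniform random relabeling is the Yao-style form of the paper's summation over gadget positions. Equivalently: each query contains exactly four triples, and the per-triple bound then gives $(1-\tfrac{3\delta}{2})\binom n3\le 4q$.

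\textbf{Comparison.} Your version proves the stronger inequality $q\ge(1-\tfrac{3\delta}{2})\binom n4/(n-3)$, roughly a factor of $4$ better. It implies the stated bound because $n-3\le 4n-15$ for $n\ge 5$. The paper's version avoids randomizing the input and gives the local bound $P_C\ge 1-\tfrac{3\delta}{2}$ for every fixed $C$ directly. Your method buys a sharper constant with a minimal gadget.
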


\begin{proof}
There are $M_n=(2n-5)!!$ labeled unrooted binary trees on
$X$.  Fix the uniform distribution over these trees.  For any fixed
choice of the random coins, an adaptive algorithm using at most $q$
queries is a ternary decision tree of depth at most $q$, and hence has
at most $3^q$ terminal transcripts.  Each terminal transcript can
correctly output at most one inducing tree.  Its average
reconstruction success under the uniform distribution is therefore
at most $3^q/M_n$.  Averaging also over the algorithm's coins and
using success probability at least $1-\delta$ on every tree gives
$3^q/M_n\geq 1-\delta$, proving part~(a).

For part~(b), fix a four-set $C\in\binom X4$.  Construct
three phylogenetic trees that are identical outside a pendant
subtree with leaf set $C$ but realize three distinct internal
topologies within that subtree.  Any quartet query containing at most
two taxa from $C$ has the same answer in all three trees.  Call a
query \emph{touching $C$} if it contains \emph{at least three} taxa from
$C$, and let $P_C$ be the probability, over the non-adaptive
algorithm's random coins, that its query set touches~$C$.

Conditioned on a choice of coins whose query set does not
touch $C$, the three trees produce the same transcript, so the
algorithm can be correct on at most one of them.  Averaged over the
three trees, its success probability is at most 
\[
P_C+\frac{1-P_C}{3}
=\frac13+\frac{2}{3}P_C.
\]
Since the algorithm's success probability is at least~$1-\delta$ on each
tree, we derive that~$P_C\geq 1-3\delta/2$.  A fixed quartet query $Y$ touches exactly 
$1+4(n-4)=4n-15$ four-sets $C$ since either $C=Y$, or $C$ contains three taxa of~$Y$ 
and one taxon outside~$Y$.  Summing the touching probabilities
over all $C\in\binom X4$ therefore yields
\[
\left(1-\frac{3\delta}{2}\right)\binom n4
\leq
\sum_{C\in\binom X4}P_C
\leq q(4n-15),
\]
which proves part~(b).  The same pendant-subtree obstruction
underlies the non-adaptive quartet-reconstruction lower bound of
Arvanitakis et al.~\cite{ArvanitakisEtAl26}. The counting argument
above records the guarantee for randomized algorithms explicitly. 
\end{proof}

\begin{cor}[Tight reconstruct-or-reject complexity]
\label{cor:reconstruct-or-reject-tight}
For $\varepsilon,\delta\in(0,1/4]$ and $n$ sufficiently
large as a function of $\varepsilon$ and $\delta$, the worst-case query complexity of one-sided reconstruct-or-reject
algorithms is
\[
\Theta\!\left(
n\log n+
\frac{\ln(1/\delta)}{\ln(1/(1-\varepsilon))}
\right)
\]
in the adaptive model and 
\[
\Theta\!\left(
n^3+ \frac{\ln(1/\delta)}{\ln(1/(1-\varepsilon))}
\right) 
\]
in the non-adaptive model.  Equivalently, these bounds are
$\Theta(n\log n+\varepsilon^{-1}\log(1/\delta))$ and
$\Theta(n^3+\varepsilon^{-1}\log(1/\delta))$, respectively. 
The implicit constants in the $\Theta(\cdot)$ notation are 
independent of~$n$, $\varepsilon$, and~$\delta$.
\end{cor}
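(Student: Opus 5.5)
The proof splits into upper bounds and matching lower bounds, handled separately in each model. Both sides use the sum-versus-max principle: for nonnegative $A,B$ we have $\max(A,B)\ge (A+B)/2$. So it suffices to show that every one-sided algorithm needs at least the $n$-dependent term and also at least the $\varepsilon,\delta$-dependent term.

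\textbf{Upper bounds.}
\begin{itemize}
\item Adaptive model: Theorem~\ref{thm:learn-audit} already gives a one-sided reconstruct-or-reject algorithm. As remarked after Definition~\ref{def:reconstruct-or-reject}, it outputs $\widehat T$ whenever it would accept. Its query count is $\lceil (n-1)\lg(n-1)\rceil+m_\star(\varepsilon,\delta)$, which is at most $2\bigl(n\lg n+\ln(1/\delta)/\ln(1/(1-\varepsilon))\bigr)+2$.
\item Non-adaptive model: Corollary~\ref{cor:cached-anchor} gives $\binom{n-1}{3}+m_\star\le n^3+m_\star$.
\item Rounding: the ceiling adds at most $1$. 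Since $\delta\le 1/4$ and $\varepsilon\le1/4$, we have $m_\star\ge \ln 4/\ln(4/3)>1$, so the extra $+1$ is absorbed with a constant independent of all parameters.
\item Equivalence of forms: for $\varepsilon\in(0,1/4]$ we have $\varepsilon\le \ln(1/(1-\varepsilon))\le \varepsilon/(1-\varepsilon)\le \tfrac43\varepsilon$. Hence $\ln(1/\delta)/\ln(1/(1-\varepsilon))$ and $\varepsilon^{-1}\ln(1/\delta)$ agree up to absolute constants.
\end{itemize}

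\textbf{Lower bounds, $n$-dependent term.} Theorem~\ref{thm:reconstruction-lb} applies because a one-sided algorithm succeeds on tree-like inputs with probability $1$, and $\delta\le1/4<2/3$.
\begin{itemize}
\item Adaptive: part~(a) gives $q\ge \log_3((1-\delta)(2n-5)!!)\ge \log_3(\tfrac34)+\log_3((n-2)!)$. By Stirling this is $\ge c\,n\log n$ for $n\ge n_0$, with $c$ absolute.
\item Non-adaptive: part~(b) gives $q\ge \tfrac58\binom n4/(4n-15)\ge c' n^3$ for $n\ge 5$.
\end{itemize}

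\textbf{Lower bounds, $\varepsilon,\delta$-dependent term.} A one-sided reconstruct-or-reject algorithm becomes a one-sided tester by accepting iff it outputs a tree. This is legitimate because Theorem~\ref{thm:lower-bound} only needs acceptance of tree-like inputs (probability $1$) and rejection of $\varepsilon$-far inputs with probability $\ge1-\delta$. We cannot just invoke the $\liminf$ in Corollary~\ref{cor:sharp-testing-lb}, since we need a bound at a finite $n$ with absolute constants. Instead, apply Theorem~\ref{thm:lower-bound} directly with a fixed slack, say $\gamma=1$, so $p=2\varepsilon\le1/2\le 2/3$. Choose $n$ large enough in terms of $\varepsilon,\delta$ that $\eta_n(1)\le\delta$; this is possible since $\ln\eta_n=O(n\log n)-\Theta(\varepsilon n^4)$. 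The one-sided inequality then gives
\[
q\ge \frac{\ln(1/(2\delta))}{-\ln(1-2\varepsilon)}.
\]

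\textbf{Main obstacle.} The hard part is showing this bound is at least an absolute constant times $\ln(1/\delta)/\varepsilon$ uniformly over $\varepsilon,\delta\in(0,1/4]$.
\begin{itemize}
\item Denominator: $-\ln(1-2\varepsilon)\le 2\varepsilon/(1-2\varepsilon)\le 4\varepsilon$.
\item Numerator: $\ln(1/(2\delta))=\ln(1/\delta)-\ln 2\ge \tfrac12\ln(1/\delta)$, because $\delta\le1/4$ forces $\ln(1/\delta)\ge 2\ln2$.
\end{itemize}
Together these give $q\ge \ln(1/\delta)/(8\varepsilon)$. This is exactly where the hypothesis $\varepsilon,\delta\le 1/4$ and "$n$ sufficiently large as a function of $\varepsilon,\delta$" are used. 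Combining with the $n$-dependent bound via $\max\ge$ half the sum completes both $\Theta$ statements.
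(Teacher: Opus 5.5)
Your proof is correct and follows the same plan as the paper. The upper bounds come from Theorem~\ref{thm:learn-audit} and Corollary~\ref{cor:cached-anchor}, the $n$-dependent lower bounds from Theorem~\ref{thm:reconstruction-lb}, and the verification term from the induced one-sided tester, with the two combined via $\max\ge$ half the sum. The only difference is that you instantiate Theorem~\ref{thm:lower-bound} at $\gamma=1$, getting the explicit constant $1/8$ once $\eta_n(1)\le\delta$, where the paper cites the $\liminf$ of Corollary~\ref{cor:sharp-testing-lb}. Your remark that the $\liminf$ cannot be used is overstated: for each fixed $(\varepsilon,\delta)$ it already gives an $n_0(\varepsilon,\delta)$ beyond which $q\ge\frac12\ln(1/\delta)/\ln(1/(1-\varepsilon))$, which is exactly what the hypothesis on $n$ allows.
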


\begin{proof}
The upper bounds follow from
Theorem~\ref{thm:learn-audit} and
Corollary~\ref{cor:cached-anchor} after outputting $\widehat T$ on
acceptance.  The reconstruction terms follow from
Theorem~\ref{thm:reconstruction-lb}. The verification term follows
from the one-sided part of
Corollary~\ref{cor:sharp-testing-lb}.  Taking the maximum of the two
lower bounds is within a factor of two of their sum.  Finally,
$\ln(1/(1-\varepsilon))=\varepsilon + \varepsilon^2/2 + O(\varepsilon^3) = \Theta(\varepsilon)$ for
$\varepsilon\leq 1/4$.
\end{proof}

\paragraph{Remark} For fixed $\varepsilon$ and $\delta$, the corollary exhibits
a polynomial gap between adaptive and non-adaptive
reconstruct-or-reject algorithms.  This is a consequence of the
stronger output requirement and should not be interpreted as an
adaptivity separation for ordinary property testing.

%%%%%%%%%%%%%%%%%%%%%%%%%%%%%%%%%%%%%%%%%%%%%%%%%%%%%%%%%%%%%%%%%%%%%%%%
\section{Concluding Remarks and Future Work}
\label{sec:concluding}
%%%%%%%%%%%%%%%%%%%%%%%%%%%%%%%%%%%%%%%%%%%%%%%%%%%%%%%%%%%%%%%%%%%%%%%%

We obtained two complementary upper bounds for testing full
quartet tree-likeness.  The hereditary reduction establishes
$n$-independent one-sided testability but inherits an impractically
large dependence on $\varepsilon$.  The reconstruct-and-verify
construction is explicit and polynomial-time, and improves the
previous $O(n^3/\varepsilon)$ bound to
\[
O\!\left(n\log n+\frac{\log(1/\delta)}{\varepsilon}\right)
\]
adaptive queries.  Its cached-anchor counterpart is
non-adaptive and uses
$\binom{n-1}{3}+O(\varepsilon^{-1}\log(1/\delta))$ queries.

Our lower bounds distinguish ordinary property testing from the
stronger reconstruct-or-reject requirement.  In the ordinary
property-testing model, every randomized tester, even if adaptive and
allowed two-sided error, requires 
\[
\Omega\!\left(
\frac{\log(1/\delta)}{\varepsilon}
\right)
\]
queries for sufficiently large taxon sets, which remains valid 
for non-adaptive testers.  More sharply, the one-sided
lower bound has asymptotic leading term
$\ln(1/\delta)/(\ln(1/(1-\varepsilon)))$, exactly matching random
verification.  For the reconstruct-or-reject task, the complete
adaptive and non-adaptive complexities are, up to constants,
\[
\Theta\!\left(
n\log n+\frac{\log(1/\delta)}{\varepsilon}
\right)
\quad\text{and}\quad
\Theta\!\left(
n^3+\frac{\log(1/\delta)}{\varepsilon}
\right),
\]
respectively.  Thus the adaptive tester has a tight
$\Omega(n\log n)$ reconstruction term, whereas the universal
$\Omega(\varepsilon^{-1}\log(1/\delta))$ testing term is common to
both adaptive and non-adaptive algorithms.

The main open quantitative question is whether one can remove
the $n$-dependent reconstruction term while retaining a moderate dependence
on $\varepsilon$.  For a full quartet system~$Q$, define the density
of unresolved quintets by
\[
\eta_5(Q):=
\Pr_{Z\sim\operatorname{Unif}(\binom{X}{5})}
[Q|_Z\text{ is unresolved}].
\]
A robust local-to-global inequality of the form 
\[
\eta_5(Q)\geq
c\,\operatorname{dist}(Q,\mathcal{P}_{\text{tree}})
\]
for an absolute constant $c>0$ would immediately yield a
one-sided random-quintet tester with
$O(\varepsilon^{-1}\log(1/\delta))$ queries, matching
Corollary~\ref{cor:sharp-testing-lb}.  Fact~\ref{fact:quintet-distance}
shows that a single corrupted quartet always invalidates each quintet
in which it is the only corruption. The difficulty is to control
coordinated errors that can repair one another locally.  Nevertheless, 
even a polynomial inequality
$\eta_5(Q)\ge\operatorname{poly}(\operatorname{dist}
(Q,\mathcal P_{\rm tree}))$ would replace the regularity-based bound
by an explicit polynomial tester.

Conversely, it remains open whether the query lower bound for
ordinary property testing can be strengthened beyond
$\Omega(\varepsilon^{-1})$ for constant error probability.  Since
tree-likeness is strongly testable, any such improvement must concern
the dependence on $\varepsilon$, rather than on $n$.  Thus, either
establishing the local-to-global inequality above, which would make
the present lower bound optimal up to constant factors, or proving a
superlinear dependence on $1/\varepsilon$ would substantially clarify
the query complexity of ordinary property testing for quartet
tree-likeness.

Other directions include determining the explicit complexity
of ordinary non-adaptive testing, developing tolerant testers
that distinguish two nonzero distance thresholds, and extending
reconstruct-and-verify methods to incomplete quartet systems without
invoking the \textsf{NP}-hard partial compatibility problem.

%%%%%%%%%%%%%%%%%%%%%%%%%%%%%%%%%%%%%%%%%%%%%%%%%%%%%%%%%%%%%%%%%
\section*{Statements and Declarations}
%%%%%%%%%%%%%%%%%%%%%%%%%%%%%%%%%%%%%%%%%%%%%%%%%%%%%%%%%%%%%%%%%

\paragraph{Funding}
This work was supported by the National Science and Technology
Council of Taiwan under grant no.~NSTC 115-2221-E-019-045-.

\paragraph{Competing interests}
The author has no relevant financial or non-financial interests
to disclose.

\paragraph{Data availability}
No datasets were generated or analyzed during the current study.

\paragraph{Author contributions}
Chuang-Chieh Lin conceived the study, developed and verified the
results, and wrote and revised the manuscript.

\paragraph{Use of generative AI and AI-assisted technologies}
During the preparation of this work, the author used OpenAI's
ChatGPT to assist with drafting and to improve language and
readability. The author subsequently reviewed and edited the
manuscript, independently checked the mathematical arguments and
references, and takes full responsibility for the content.

%%%%%%%%%%%%%%%%%%%%%%%%%%%%%%%%%%%%%%%%%%%%%%%%%%%%%%%%%%%%%%%%%%%%%%%%

%%% The next two lines define, first, the bibliography style to be 
%%% applied, and, second, the bibliography file to be used.

%\bibliographystyle{sn-bibliography}
\bibliography{tqc_note}

%%%%%%%%%%%%%%%%%%%%%%%%%%%%%%%%%%%%%%%%%%%%%%%%%%%%%%%%%%%%%%%%%%%%%%%%

\end{document}